\numberwithin{equation}{section}
\numberwithin{figure}{section}
\theoremstyle{plain}
\newtheorem{thm}{\protect\theoremname}[section]
\theoremstyle{plain}
\newtheorem{cor}[thm]{\protect\corollaryname}
\theoremstyle{plain}
\newtheorem{lem}[thm]{\protect\lemmaname}
\theoremstyle{definition}
\newtheorem{defn}[thm]{\protect\definitionname}
\theoremstyle{remark}
\newtheorem{rem}[thm]{\protect\remarkname}
\theoremstyle{plain}
\newtheorem{question}[thm]{\protect\questionname}
\theoremstyle{plain}
\newtheorem{prop}[thm]{\protect\propositionname}
\theoremstyle{remark}
\newtheorem*{rem*}{\protect\remarkname}
\theoremstyle{remark}
\newtheorem*{acknowledgement*}{\protect\acknowledgementname}
\newcommand{\xyR}[1]{%
\xydef@\xymatrixrowsep@{#1}}
\providecommand{\acknowledgementname}{Acknowledgement}
\providecommand{\corollaryname}{Corollary}
\providecommand{\definitionname}{Definition}
\providecommand{\lemmaname}{Lemma}
\providecommand{\propositionname}{Proposition}
\providecommand{\questionname}{Question}
\providecommand{\remarkname}{Remark}
\providecommand{\theoremname}{Theorem}
\begin{document}
\subjclass[2020]{46E22, 46N50, 47A20, 47B32, 47N50, 60G15, 81P15.}
\title[]{Hilbert space-valued Gaussian processes, and quantum states}
\author{Palle E.T. Jorgensen}
\address{(Palle E.T. Jorgensen) Department of Mathematics, The University of
Iowa, Iowa City, IA 52242-1419, U.S.A.}
\email{palle-jorgensen@uiowa.edu}
\author{James Tian}
\address{(James F. Tian) Mathematical Reviews, 416 4th Street Ann Arbor, MI
48103-4816, U.S.A.}
\email{jft@ams.org}
\begin{abstract}
We offer new results and new directions in the study of operator-valued
kernels and their factorizations. Our approach provides both more
explicit realizations and new results, as well as new applications.
These include: (i) an explicit covariance analysis for Hilbert space-valued
Gaussian processes, (ii) optimization results for quantum gates (from
quantum information), (iii) new results for positive operator-valued
measures (POVMs), and (iv) a new approach/result in inverse problems
for quantum measurements.
\end{abstract}

\keywords{Positive definite functions, kernels, Gaussian processes, covariance,
dilation, POVMs, PVMs, measurement, quantum states. }

\maketitle
\tableofcontents{}

\section{Introduction}

We present a new approach to inverse problems, combining classical
and quantum. Applications include new covariance analyses for Hilbert
space-valued Gaussian processes, and to a variety of dilation constructions.
We further present new results for optimization with quantum gates;
a new approach/result in inverse problems for quantum measurements,
and to inference with positive operator-valued measures (POVMs). 

With our starting point being that of operator-valued positive definite
kernels, our present paper then proceeds to offer a new two-fold framework
(i) and (ii) for the following problems: one is that of Hilbert space
and representations, and the other, new classes of Gaussian processes.
In more detail: We present: (i) a new approach to the interplay between
a variety of systems which allow for realizations in specific and
concrete Hilbert spaces $H$, on the one hand, and which at the same
time also admit extensions, typically called dilations, in \textquotedblleft larger\textquotedblright{}
Hilbert spaces, say $L$. (By \textquotedblleft dilation,\textquotedblright{}
we mean that the two systems are connected via an isometry $V$ from
$H$ to $L$, which intertwines the respective systems/representations.)
And for (ii), we make use of these constructions in our analysis of
Hilbert space-valued Gaussian processes. The constructions we present
in (i) are more explicit than those in earlier related analyses, such
as Gelfand Naimark Segal (GNS), from states to representations; Stinespring,
from completely positive maps to representations; or Halmos, from
powers contraction to unitary power dilations; and finally from positive
operator-valued measures (POVMs) to projection valued measures. See,
e.g., \cite{MR4114386,MR4581177,MR4482713,MR1465320,MR1469149,MR1740897,MR2945156,MR3217056,MR3250475,MR3394108}.
Our approach serves multiple purposes: one is to unify theories, and
the other is to make these constructions explicit.

Our new results include (i) an explicit covariance analysis for Hilbert
space-valued Gaussian processes (\prettyref{sec:4}), (ii) optimization
results for quantum gates (\prettyref{sec:5}), (iii) new results
for positive operator-valued measures (POVMs) (Sections \ref{sec:2}--\ref{sec:5}),
and (iv) a new approach/result in inverse problems for quantum measurements
(\prettyref{subsec:5-1}).

\textbf{Notation.} Throughout the paper, we use the physics convention
that inner products are linear in the second variable. Let $\left|a\left\rangle \right\langle b\right|$
denote Dirac's rank-1 operator, $c\mapsto a\left\langle b,c\right\rangle $.
$\mathcal{L}\left(H\right)$ denotes the algebra of all bounded linear
operators in a Hilbert space $H$. 

For a positive definite (p.d.) kernel $K:S\times S\rightarrow\mathbb{C}$,
let $H_{K}$ be the corresponding reproducing kernel Hilbert space
(RKHS). $H_{K}$ is the Hilbert completion of 
\begin{equation}
span_{\mathbb{C}}\left\{ K_{y}\left(\cdot\right):=K\left(\cdot,y\right)\mid y\in S\right\} \label{eq:A1}
\end{equation}
with respect to the inner product 
\begin{equation}
\left\langle \sum_{i}c_{i}K\left(\cdot,x_{i}\right),\sum_{i}d_{j}K\left(\cdot,y_{j}\right)\right\rangle _{H_{K}}:=\sum_{i}\sum_{j}\overline{c}_{i}d_{j}K\left(x_{i},x_{j}\right).\label{eq:A2}
\end{equation}
The following reproducing property holds: 
\begin{equation}
\varphi\left(x\right)=\left\langle K\left(\cdot,x\right),\varphi\right\rangle _{H_{K}},\forall x\in S,\:\forall\varphi\in H_{K}.\label{eq:A3}
\end{equation}

Any scalar-valued kernel $K$ as above is associated with a zero-mean
Gaussian process, where $K$ is the covariance: 
\begin{equation}
K\left(s,t\right)=\mathbb{E}\left[\overline{W_{s}}W_{t}\right],\label{eq:A4}
\end{equation}
and $W_{s}\sim N\left(0,K\left(s,s\right)\right)$. 

An $\mathcal{L}\left(H\right)$-valued kernel $K:S\times S\rightarrow\mathcal{L}\left(H\right)$
is p.d. if 
\begin{equation}
\sum_{i=1}^{n}\left\langle a_{i},K\left(s_{i},s_{j}\right)a_{j}\right\rangle _{H}\geq0\label{eq:A5}
\end{equation}
for all $\left(a_{i}\right)_{1}^{n}$ in $H$, and all $n\in\mathbb{N}$.

For the theory of reproducing kernels, the literature is extensive,
both pure and recent applications, we refer to the following, including
current, \cite{MR80878,MR31663,MR4302453,MR3526117,MR1305949,MR4690276,MR3700848,MR1126127,MR1200633,MR1329822,MR1473250,MR2223568,MR3402823}.

\section{\label{sec:2}Factorization of $\mathcal{L}\left(H\right)$-valued
p.d. kernels}

In this section, we present our universal construction in a general
framework. This will then, in adapted forms, go into a set of diverse
applications. The latter are addressed in subsequent sections.

In our analysis of kernels and factorizations we want to highlight
the following two distinctions: First is the distinction between the
traditional results which only yield existence of factorizations involving
abstract and unstructured completions, as opposed to arising as a
universal and concrete solution, realized directly from the given
operator valued kernel as an explicitly given space of functions.
The second is the distinction between universal, as opposed to case-by-case
analyses. In both senses, our results below fall in the category of
universal factorizations.
\begin{thm}[Universal Factorization]
\label{thm:b2} Let $K:S\times S\rightarrow\mathcal{L}\left(H\right)$
be a p.d. kernel. Then there exists a RKHS $H_{\tilde{K}}$, and a
family of operators $V_{s}:H\rightarrow H_{\tilde{K}}$, $s\in S$,
such that 
\begin{equation}
H_{\tilde{K}}=\overline{span}\left\{ V_{s}a:a\in H,s\in S\right\} 
\end{equation}
and 
\begin{equation}
K\left(s,t\right)=V_{s}^{*}V_{t}.\label{eq:b2}
\end{equation}

Conversely, if there is a Hilbert space $L$ and operators $V_{s}:H\rightarrow L$,
$s\in S$, such that 
\begin{equation}
L=\overline{span}\left\{ V_{s}a:a\in H,s\in S\right\} \label{eq:b3}
\end{equation}
and \eqref{eq:b2} holds, then $L\simeq H_{\tilde{K}}$. 
\end{thm}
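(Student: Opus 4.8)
The plan is to construct $H_{\tilde K}$ as a scalar reproducing kernel Hilbert space associated to a scalar-valued kernel $\tilde K$ built from the operator-valued kernel $K$, and then read off the operators $V_s$ from the reproducing property. First I would linearize the operator structure: define the index set $\tilde S := S \times H$ and set
\begin{equation}
\tilde K\bigl((s,a),(t,b)\bigr) := \left\langle a, K(s,t)\, b\right\rangle_H.
\end{equation}
The positive definiteness hypothesis \eqref{eq:A5} says precisely that $\tilde K$ is a scalar-valued p.d. kernel on $\tilde S$, so the general RKHS machinery of \eqref{eq:A1}--\eqref{eq:A3} produces a Hilbert space $H_{\tilde K}$ with reproducing kernel $\tilde K$. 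I would then \emph{define} $V_s : H \to H_{\tilde K}$ by $V_s a := \tilde K_{(s,a)} = \tilde K\bigl(\cdot,(s,a)\bigr)$, after first checking that $a \mapsto \tilde K_{(s,a)}$ is complex-linear in $a$; linearity is not automatic from the construction, so this is a point to verify using \eqref{eq:A2}, exploiting that $K(s,t)$ is linear in its argument.

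The factorization \eqref{eq:b2} would then follow from the reproducing property: for $a,b \in H$ and $s,t \in S$ I would compute
\begin{equation}
\left\langle V_s a, V_t b\right\rangle_{H_{\tilde K}} = \left\langle \tilde K_{(s,a)}, \tilde K_{(t,b)}\right\rangle_{H_{\tilde K}} = \tilde K\bigl((s,a),(t,b)\bigr) = \left\langle a, K(s,t)\, b\right\rangle_H,
\end{equation}
where the middle equality is \eqref{eq:A3}. Since this holds for all $b$, it identifies $V_s^* V_t = K(s,t)$ as operators on $H$. The spanning/density statement $H_{\tilde K} = \overline{\mathrm{span}}\{V_s a\}$ is immediate because by construction $H_{\tilde K}$ is the closed span of the functions $\tilde K_{(s,a)} = V_s a$, which is exactly \eqref{eq:A1} for the kernel $\tilde K$. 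Along the way I should confirm each $V_s$ is bounded, which follows from $\|V_s a\|^2 = \langle a, K(s,s)a\rangle_H \le \|K(s,s)\|\,\|a\|^2$.

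For the converse (uniqueness up to isometry), the plan is to define a map $W : L \to H_{\tilde K}$ on the dense subspace spanned by the vectors $V_s a$ (now living in $L$) by $W(V_s a) := V_s a \in H_{\tilde K}$, i.e.\ matching generators to generators. The essential computation is that $W$ preserves inner products of generators: both sides equal $\langle a, K(s,t) b\rangle_H$ by the two copies of \eqref{eq:b2}. Hence $W$ extends from the dense span to a well-defined isometry, and it is surjective because its range is dense (the image contains all generators of $H_{\tilde K}$) and closed (isometries have closed range); this gives the unitary identification $L \simeq H_{\tilde K}$.

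The main obstacle I anticipate is the well-definedness that underlies both the forward and converse maps: a generic element of the dense span is a finite linear combination $\sum_i c_i V_{s_i} a_i$, and the same Hilbert-space element can have many such representations, so I must check that the proposed maps do not depend on the chosen representation. The clean way around this is to verify that the inner product of two arbitrary finite combinations is computed purely from the kernel values $\langle a_i, K(s_i,t_j) b_j\rangle_H$ via sesquilinear expansion, so that any combination mapping to zero in one space has zero norm and hence maps to zero in the other; this simultaneously secures well-definedness and isometry. I would also double-check the conjugate-linearity convention (the paper's inner products are linear in the second slot) so that the linearity of $a \mapsto V_s a$ and the adjoint identity $V_s^* V_t = K(s,t)$ come out with the correct placement of complex conjugates.
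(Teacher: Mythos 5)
Your proposal is correct and follows essentially the same route as the paper: linearize via the scalar kernel $\tilde K\bigl((s,a),(t,b)\bigr)=\langle a,K(s,t)b\rangle_H$ on $S\times H$, define $V_sa=\tilde K(\cdot,(s,a))$, read off \eqref{eq:b2} from the reproducing property, and obtain the converse by extending the generator-to-generator map to an isometric isomorphism by linearity and density. The points you flag for verification (linearity of $a\mapsto\tilde K_{(s,a)}$, boundedness of $V_s$, and well-definedness of the extension on finite linear combinations) are left implicit in the paper's proof, and your treatment of them is sound.
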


\begin{proof}
First, assume that $K$ is of the form \eqref{eq:b2}. For all $\left(a_{i}\right)_{i=1}^{n}$
in $H$, and $\left(s_{i}\right)_{i=1}^{n}$ in $S$, 
\begin{align*}
\sum_{i,j}\left\langle a_{i},K\left(s_{i},s_{j}\right)a_{j}\right\rangle _{H} & =\sum_{i,j}\left\langle a_{i},V_{s_{i}}^{*}V_{s_{j}}a_{j}\right\rangle _{H}\\
 & =\sum_{i,j}\left\langle V_{s_{i}}a_{i},V_{s_{j}}a_{j}\right\rangle _{L}=\left\Vert \sum_{i}V_{s_{i}}a_{i}\right\Vert _{L}^{2}\geq0.
\end{align*}
Thus, $K$ is p.d.

Now, given a p.d. kernel $K:S\times S\rightarrow\mathcal{L}\left(H\right)$,
let 
\[
\tilde{K}:\left(S\times H\right)\times\left(S\times H\right)\rightarrow\mathbb{C}
\]
be the scalar-valued kernel defined by 
\[
\tilde{K}\left(\left(s,a\right),\left(t,b\right)\right):=\left\langle a,K\left(s,t\right)b\right\rangle _{H}.
\]
Then $\tilde{K}$ is p.d. and let $H_{\tilde{K}}$ be the associated
RKHS. 

Set $V_{s}:H\rightarrow H_{\tilde{K}}$ by 
\[
V_{s}a=\tilde{K}_{\left(s,a\right)}=\tilde{K}\left(\cdot,\left(s,a\right)\right):\left(S\times H\right)\rightarrow\mathbb{C},\quad\forall a\in H.
\]
Since 
\[
\left\langle V_{s}a,\tilde{K}\left(\cdot,\left(t,b\right)\right)\right\rangle _{H_{\tilde{K}}}=\left\langle a,K\left(s,t\right)b\right\rangle _{H},\quad\forall a,b\in H,\:\forall s,t\in S,
\]
it follows that 
\[
V_{s}^{*}\tilde{K}\left(\cdot,\left(t,b\right)\right)=K\left(s,t\right)b.
\]
One concludes that 
\[
K\left(s,t\right)=V_{s}^{*}V_{t}.
\]

Finally, suppose there is a Hilbert space $L$ and family of operators
$V_{s}:H\rightarrow L$, $s\in S$, such that \eqref{eq:b2}--\eqref{eq:b3}
hold. Then the mapping 
\[
\tilde{K}\left(\cdot,\left(s,a\right)\right)\longmapsto V_{s}a
\]
extends (by linearity and density) to a unique isometric isomorphism
from $H_{\tilde{K}}$ onto $L$, and so any such space $L$ is isomorphic
to the RKHS $H_{\tilde{K}}$.
\end{proof}
\begin{cor}
Any function $F\in H_{\tilde{K}}$ satisfies that 
\begin{equation}
F\left(s,a\right)=\left\langle a,V_{s}^{*}F\right\rangle _{H_{\tilde{K}}},\quad\left(s,a\right)\in S\times H.
\end{equation}
In particular, $F$ is conjugate liner in the second variable.
\end{cor}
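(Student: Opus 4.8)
The plan is to read both assertions off directly from the reproducing property of $H_{\tilde{K}}$ combined with the identity $V_{s}a=\tilde{K}\left(\cdot,\left(s,a\right)\right)$ recorded in the proof of Theorem~\ref{thm:b2}. No new construction is required: once $V_{s}$ is inserted, the corollary is essentially a transcription of the reproducing property, with the operator $V_{s}$ shifted to its adjoint.

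First I would fix $F\in H_{\tilde{K}}$ and $\left(s,a\right)\in S\times H$, and apply the reproducing property \eqref{eq:A3} for the scalar kernel $\tilde{K}$, now evaluated at the point $\left(s,a\right)$ of the index set $S\times H$:
\[
F\left(s,a\right)=\left\langle \tilde{K}\left(\cdot,\left(s,a\right)\right),F\right\rangle _{H_{\tilde{K}}}.
\]
Next I would substitute $\tilde{K}\left(\cdot,\left(s,a\right)\right)=V_{s}a$ from the definition of $V_{s}$, giving $F\left(s,a\right)=\left\langle V_{s}a,F\right\rangle _{H_{\tilde{K}}}$. Finally, passing $V_{s}$ across the inner product through its adjoint $V_{s}^{*}:H_{\tilde{K}}\rightarrow H$ produces
\[
F\left(s,a\right)=\left\langle V_{s}a,F\right\rangle _{H_{\tilde{K}}}=\left\langle a,V_{s}^{*}F\right\rangle ,
\]
which is the claimed formula. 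I note that the pairing on the right is naturally taken in $H$, since both $a$ and $V_{s}^{*}F$ lie in $H$.

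For the second assertion, I would read conjugate linearity in $a$ straight off this formula. Under the physics convention, in which $\left\langle \cdot,\cdot\right\rangle $ is conjugate linear in its first slot, the map $a\mapsto\left\langle a,V_{s}^{*}F\right\rangle $ is conjugate linear; hence $F\left(s,\cdot\right)$ is conjugate linear in the second coordinate $a$, as asserted.

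Since each step is an application of a definition or an identity already established, I do not anticipate a genuine obstacle here. The only points meriting care are bookkeeping ones: keeping the two inner products (that of $H$ versus that of $H_{\tilde{K}}$) distinct, and tracking the physics convention so that the conjugate-linearity conclusion correctly attaches to the first argument of the inner product rather than the second.
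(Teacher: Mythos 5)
Your proof is correct and matches the paper's argument exactly: reproducing property of $\tilde{K}$, the identification $V_{s}a=\tilde{K}\left(\cdot,\left(s,a\right)\right)$, and passage to the adjoint, with conjugate linearity read off from the physics convention. You even correctly observe that the final pairing lives in $H$, which fixes a small subscript slip in the paper's own statement of the corollary.
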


\begin{proof}
This follows from the reproducing property (see \eqref{eq:A3}) of
$H_{\tilde{K}}$: 
\[
F\left(s,a\right)=\left\langle \tilde{K}\left(\cdot,\left(s,a\right)\right),F\right\rangle _{H_{\tilde{K}}}=\left\langle V_{s}a,F\right\rangle _{H_{\tilde{K}}}=\left\langle a,V_{s}^{*}F\right\rangle _{H}.
\]
\end{proof}
\begin{cor}
Let $\left(\varphi_{i}\right)$ be any orthonormal basis (ONB) in
$H_{\tilde{K}}$. The following operator identity holds:
\begin{equation}
K\left(s,t\right)=\sum_{i}\left|V_{s}^{*}\varphi_{i}\left\rangle \right\langle V_{t}^{*}\varphi_{i}\right|.\label{eq:b5}
\end{equation}
\end{cor}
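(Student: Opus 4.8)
The plan is to reduce the claimed identity to the factorization $K\left(s,t\right)=V_s^*V_t$ established in Theorem~\ref{thm:b2}, combined with the completeness relation for the orthonormal basis. First I would record the resolution of the identity on $H_{\tilde{K}}$: since $\left(\varphi_i\right)$ is an ONB, one has $\sum_i\left|\varphi_i\right\rangle\left\langle\varphi_i\right|=I_{H_{\tilde{K}}}$, the convergence being understood in the strong operator topology. Inserting this between $V_s^*$ and $V_t$ yields formally
\[
K\left(s,t\right)=V_s^*V_t=V_s^*\Big(\sum_i\left|\varphi_i\right\rangle\left\langle\varphi_i\right|\Big)V_t=\sum_i V_s^*\left|\varphi_i\right\rangle\left\langle\varphi_i\right|V_t.
\]

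The next step is the algebraic identification $V_s^*\left|\varphi_i\right\rangle\left\langle\varphi_i\right|V_t=\left|V_s^*\varphi_i\right\rangle\left\langle V_t^*\varphi_i\right|$. Using the physics convention that $\left|a\right\rangle\left\langle b\right|$ sends $c\mapsto a\left\langle b,c\right\rangle$, I would verify this by applying both sides to an arbitrary $a\in H$: the left side produces $V_s^*\varphi_i\left\langle\varphi_i,V_ta\right\rangle_{H_{\tilde{K}}}=V_s^*\varphi_i\left\langle V_t^*\varphi_i,a\right\rangle_H$, which is exactly $\left|V_s^*\varphi_i\right\rangle\left\langle V_t^*\varphi_i\right|$ applied to $a$. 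This is nothing but the general rule $A\left|u\right\rangle\left\langle v\right|B=\left|Au\right\rangle\left\langle B^*v\right|$ for bounded operators, specialized to $A=V_s^*$, $B=V_t$, and $u=v=\varphi_i$.

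The one point requiring care — and the only genuine obstacle — is the convergence of the (possibly infinite) series together with the legitimacy of pulling $V_s^*$ and $V_t$ through it. Rather than arguing at the operator level directly, I would pass to sesquilinear forms. Fixing $a,b\in H$, the adjoint relation gives $\left\langle b,K\left(s,t\right)a\right\rangle_H=\left\langle V_sb,V_ta\right\rangle_{H_{\tilde{K}}}$, and Parseval's identity for the ONB $\left(\varphi_i\right)$ expands this as $\sum_i\left\langle V_sb,\varphi_i\right\rangle\left\langle\varphi_i,V_ta\right\rangle$. Each summand equals $\left\langle b,\left|V_s^*\varphi_i\right\rangle\left\langle V_t^*\varphi_i\right|a\right\rangle_H$, so the series converges and reproduces the matrix coefficients of the asserted operator. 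Since a bounded operator is determined by its sesquilinear form, this establishes \eqref{eq:b5}, with convergence holding at least in the weak operator topology, and strongly when tested on a fixed vector, again by Parseval.
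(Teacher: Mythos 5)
Your proposal is correct and follows essentially the same route as the paper: insert the resolution of the identity $I_{H_{\tilde{K}}}=\sum_{i}\left|\varphi_{i}\left\rangle \right\langle \varphi_{i}\right|$ into the factorization $K\left(s,t\right)=V_{s}^{*}V_{t}$ from Theorem~\ref{thm:b2} and absorb $V_{s}^{*}$, $V_{t}$ into the rank-one terms. Your additional care with the sesquilinear-form/Parseval argument for convergence is a sound elaboration of a step the paper's one-line proof leaves implicit, but it does not change the underlying argument.
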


\begin{proof}
Using the identity in $H_{\tilde{K}}$, 
\[
I_{H_{\tilde{K}}}=\sum_{i}\left|\varphi_{i}\left\rangle \right\langle \varphi_{i}\right|,
\]
we get that 
\[
K\left(s,t\right)=V_{s}^{*}V_{t}=V_{s}^{*}I_{H_{\tilde{K}}}V_{t}=\sum_{i}\left|V_{s}^{*}\varphi_{i}\left\rangle \right\langle V_{t}^{*}\varphi_{i}\right|.
\]
\end{proof}

\section{\label{sec:3}Relation to classical dilation theory}

The notion of dilation theory dates back to the fundamental Hilbert
space axioms, including to the study of such basic notions in operator
theory as contractive operators in Hilbert space $H$, of states defined
as positive linear functionals (on algebras of operators), and of
the notion and properties of Stinespring-completely positive operator-valued
maps. While the early advances were motivated primarily by pure operator
theory, and by representation theory, recent studies have served to
link much more directly to modern quantum information, see also \prettyref{sec:5}
below. The cast of characters in the general context was (and is)
perhaps focused on such areas of dilation theory as: dilations for
(i) subnormal operators, (ii) for operator valued measures, (iii)
for contraction operators, (iv) for operator spaces, and their role
as extensions in dilation theory, and (v) for commuting sets of operators,
and semigroups of completely positive maps. Early applications include
systems theory, prediction theory, and more recently quantum information.
The literature is vast, and here we call attention to \cite{MR2743416},
and the papers cited there. For the general theory of dilations of
irreversible evolutions in algebraic quantum theory we refer to the
two pioneering books \cite{MR489494,MR582649}, covering both the
theory and applications of completely positive maps, including semigroups.

We now turn to the role our above results (\prettyref{sec:2}) play
as tools in both classical and modern dilation theory; as well as
in sharpening the conclusions, and in presenting new directions.

Indeed, the setting in \prettyref{thm:b2} is quite general. It includes
and extends well known constructions in classical dilation theory.
Notably, passing from $K$ to a scalar-valued p.d. kernel $\tilde{K}$
offers a function-based approach to dilation theory, in contrast with
traditional abstract spaces of equivalence classes. It also allows
for direct evaluation and manipulation of functions, due to the RKHS
structure of the dilation space, as well as the associated $H$-valued
Gaussian processes (\prettyref{sec:4}), making it especially relevant
for practical applications. 

For instance, consider a completely positive map $\varphi$ from a
$C^{*}$-algebra $\mathfrak{A}$ to the algebra of bounded operators
on a Hilbert space $H$. The Stinespring Dilation Theorem \cite{MR69403}
states that there exists a Hilbert space $L$ (potentially larger
than $H$), a representation $\pi:\mathfrak{A}\rightarrow\mathcal{L}\left(L\right)$,
and a bounded operator $V:H\rightarrow L$, such that 
\begin{equation}
\varphi\left(A\right)=V^{*}\pi\left(A\right)V,\quad A\in\mathfrak{A}.\label{eq:b6}
\end{equation}

Moreover, if $\varphi$ is unital, then $V$ is isometric and $H$
can be identified as a subspace of $L$. The space $L$ is minimal
if the span of $\left\{ \pi\left(A\right)Va:a\in H,A\in\mathcal{L}\left(H\right)\right\} $
is dense in $L$, in which case $L$ is unique up to unitary equivalence. 

A key step in the proof of Stinespring's theorem is to introduce a
sesquilinear form on the algebraic tensor product $\mathfrak{A}\otimes H$,
\[
\left\langle \sum_{i}A_{i}\otimes a_{i},\sum_{j}B_{j}\otimes b_{j}\right\rangle _{L}:=\sum_{i,j}\left\langle a_{i},\varphi\left(A^{*}B\right)b_{j}\right\rangle _{H},
\]
and then setting $L$ to be the corresponding Hilbert completion,
modulo elements with zero norm, i.e., elements such that $\left\Vert \sum A_{i}\otimes a_{i}\right\Vert _{L}=0$.
The space $L$ is then an abstract Hilbert space of equivalence classes. 

To compare with \prettyref{thm:b2}, one may consider the $\mathcal{L}\left(H\right)$-valued
p.d. kernel 
\[
K\left(A,B\right)=\varphi\left(A^{*}B\right)
\]
defined on $\mathfrak{A}\times\mathfrak{A}$, and define the scalar-valued
p.d. kernel $\tilde{K}:\left(\mathfrak{A}\times H\right)\times\left(\mathfrak{A}\times H\right)\rightarrow\mathbb{C}$
by 
\[
\tilde{K}\left(\left(A,a\right),\left(B,b\right)\right)=\left\langle a,\varphi\left(A^{*}B\right)b\right\rangle _{H}.
\]

\begin{cor}
The space $L$ is then the RKHS $H_{\tilde{K}}$, consisting of functions
$F$ on $\mathfrak{A}\times H$, with the reproducing property: 
\[
F\left(A,a\right)=\left\langle \tilde{K}\left(\cdot,\left(A,a\right)\right),F\right\rangle _{H_{\tilde{K}}}.
\]
Note that $H_{\tilde{K}}$ is a function space on $\mathfrak{A}\times H$,
as opposed to an abstract space of equivalence classes. 
\end{cor}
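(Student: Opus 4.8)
The plan is to realize the minimal Stinespring space $L$ as a concrete instance of the converse direction of \prettyref{thm:b2}, applied to the $\mathcal{L}\left(H\right)$-valued kernel $K\left(A,B\right)=\varphi\left(A^{*}B\right)$ on $S=\mathfrak{A}$. First I would record that this $K$ is p.d. in the sense of \eqref{eq:A5}: for tuples $\left(A_{i}\right)$, $\left(a_{i}\right)$, the sum $\sum_{i,j}\left\langle a_{i},\varphi\left(A_{i}^{*}A_{j}\right)a_{j}\right\rangle _{H}$ is nonnegative because the operator matrix $\left[\varphi\left(A_{i}^{*}A_{j}\right)\right]_{ij}$ is positive in $M_{n}\left(\mathcal{L}\left(H\right)\right)$. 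This is exactly the complete positivity of $\varphi$ applied to the positive element $\left[A_{i}^{*}A_{j}\right]=R^{*}R\in M_{n}\left(\mathfrak{A}\right)$, where $R$ has first row $\left(A_{1},\dots,A_{n}\right)$ and zeros elsewhere. Hence $\tilde{K}\left(\left(A,a\right),\left(B,b\right)\right)=\left\langle a,\varphi\left(A^{*}B\right)b\right\rangle _{H}$ is a scalar-valued p.d. kernel and its RKHS $H_{\tilde{K}}$ exists, as already set up in the text.

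The key step is to produce operators $V_{A}:H\rightarrow L$ on the Stinespring side that simultaneously factor $K$ and span $L$. Using the Stinespring data $\varphi\left(A\right)=V^{*}\pi\left(A\right)V$, I would set $V_{A}:=\pi\left(A\right)V$. Then
\[
V_{A}^{*}V_{B}=V^{*}\pi\left(A\right)^{*}\pi\left(B\right)V=V^{*}\pi\left(A^{*}B\right)V=\varphi\left(A^{*}B\right)=K\left(A,B\right),
\]
so \eqref{eq:b2} holds for this family. Because the space $L$ in question is built as the completion of $\mathfrak{A}\otimes H$, it is automatically spanned by the classes $A\otimes a$, and under the (unital) identification $A\otimes a=\pi\left(A\right)Va=V_{A}a$ this is precisely the spanning hypothesis \eqref{eq:b3} for $\left(V_{A}\right)$. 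The converse half of \prettyref{thm:b2} then delivers a unique isometric isomorphism $H_{\tilde{K}}\rightarrow L$ sending $\tilde{K}\left(\cdot,\left(A,a\right)\right)\mapsto V_{A}a$, which is exactly the map from the RKHS functions to the Stinespring equivalence classes. Once this identification is in hand, the displayed reproducing property is immediate from the general RKHS identity \eqref{eq:A3}, and $H_{\tilde{K}}$ is manifestly a genuine function space on $\mathfrak{A}\times H$ rather than a space of equivalence classes.

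The main obstacle I anticipate is bookkeeping around minimality and the unit, rather than anything deep. I would verify that the inner products match, namely that the conventions in \eqref{eq:A2} align with the Stinespring sesquilinear form, i.e. $\left\langle \tilde{K}\left(\cdot,\left(A,a\right)\right),\tilde{K}\left(\cdot,\left(B,b\right)\right)\right\rangle _{H_{\tilde{K}}}=\left\langle a,\varphi\left(A^{*}B\right)b\right\rangle _{H}=\left\langle A\otimes a,B\otimes b\right\rangle _{L}$, which is a one-line check. I would also handle the non-unital case by replacing $\mathbf{1}\otimes a$ with an approximate-identity argument when identifying $V_{A}a$ with $A\otimes a$, and confirm throughout that the dilation under discussion is the minimal one so that \eqref{eq:b3} holds literally; otherwise the isomorphism identifies $H_{\tilde{K}}$ only with the cyclic subspace $\overline{span}\left\{ \pi\left(A\right)Va\right\} $ rather than all of $L$.
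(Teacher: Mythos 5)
Your proposal is correct, and at its core it follows the paper's route: both arguments feed the kernel $K\left(A,B\right)=\varphi\left(A^{*}B\right)$ into the converse direction of \prettyref{thm:b2}, and your closing ``one-line check'' $\left\langle \tilde{K}\left(\cdot,\left(A,a\right)\right),\tilde{K}\left(\cdot,\left(B,b\right)\right)\right\rangle _{H_{\tilde{K}}}=\left\langle a,\varphi\left(A^{*}B\right)b\right\rangle _{H}=\left\langle A\otimes a,B\otimes b\right\rangle _{L}$ is exactly the identification the paper intends, since $L$ is by construction the completion of $\mathfrak{A}\otimes H$ under that very sesquilinear form. The one genuine difference is your choice of factorizing family: you set $V_{A}:=\pi\left(A\right)V$, which presupposes the conclusion of Stinespring's theorem (existence of $\pi$ and $V$) and, as you yourself flag, needs unitality or an approximate identity to identify $\pi\left(A\right)Va$ with the class of $A\otimes a$, plus minimality to get the spanning condition \eqref{eq:b3}. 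The more economical instantiation, and the one implicit in the paper, is to take $V_{A}a:=\left[A\otimes a\right]$ directly in the completion; then $V_{A}^{*}V_{B}=\varphi\left(A^{*}B\right)$ holds by definition of the form, \eqref{eq:b2} and \eqref{eq:b3} are immediate, and no representation is invoked at all. This ordering matters for the paper's architecture: the corollary precedes the lemma in which $V$ and $\pi$ are constructed explicitly inside $H_{\tilde{K}}$ (via $Va=\tilde{K}\left(\cdot,\left(I,a\right)\right)$ and $\pi\left(A\right)\tilde{K}\left(\cdot,\left(B,b\right)\right)=\tilde{K}\left(\cdot,\left(AB,b\right)\right)$), so the intended logic derives the dilation from the RKHS rather than the reverse. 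Your version is not circular, since Stinespring's theorem is quoted beforehand, but it purchases the identification at the cost of hypotheses (unit, minimality) the direct completion argument does not need. The remaining points are handled identically in both: the reproducing property is definitional for the RKHS of $\tilde{K}$ on $\mathfrak{A}\times H$ by \eqref{eq:A3}, your $R^{*}R$ verification of complete positivity correctly establishes that $K$ is p.d.\ in the sense of \eqref{eq:A5}, and the passage to honest functions rather than equivalence classes is automatic because null vectors of the Stinespring form map to the zero function of $H_{\tilde{K}}$.
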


The operator $V$, and the representation $\pi$, are also explicit:
\begin{lem}
Set $V:H\rightarrow H_{\tilde{K}}$ by 
\[
Va=\tilde{K}\left(\cdot,\left(I,a\right)\right).
\]
The adjoint $V^{*}:H_{\tilde{K}}\rightarrow H$ is given by 
\[
V^{*}\tilde{K}\left(\cdot,\left(B,b\right)\right)=\varphi\left(B\right)b.
\]
Define $\pi:\mathfrak{A}\rightarrow\mathcal{L}\left(H_{\tilde{K}}\right)$
as 
\[
\pi\left(A\right)\tilde{K}\left(\cdot,\left(B,b\right)\right)=\tilde{K}\left(\cdot,\left(AB,b\right)\right).
\]
Then \eqref{eq:b6} holds. Indeed, since, by definition,
\[
span\left\{ K\left(\cdot,\left(A,a\right)\right):A\in\mathfrak{A},a\in H\right\} 
\]
is dense in $H_{\tilde{K}}$, $\left(\mathfrak{A},\pi,V,H_{\tilde{K}}\right)$
forms a minimal Stinespring dilation of $\varphi$. 
\end{lem}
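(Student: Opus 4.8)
The plan is to verify each asserted formula by direct computation on the dense span $\mathrm{span}\{\tilde K(\cdot,(B,b)):B\in\mathfrak A,\,b\in H\}$, using only the reproducing property \eqref{eq:A3} of $H_{\tilde K}$ together with $\tilde K((A,a),(B,b))=\langle a,\varphi(A^*B)b\rangle_H$. Once $V^*$ and $\pi$ are identified, the Stinespring identity \eqref{eq:b6} and minimality follow algebraically; the one genuinely analytic point, and the main obstacle, is to show $\pi(A)$ is well defined and bounded, which is exactly where complete positivity of $\varphi$ enters.

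First I would pin down $V^*$. Since $Va=\tilde K(\cdot,(I,a))$, the reproducing property gives
\[
\langle a,V^*\tilde K(\cdot,(B,b))\rangle_H=\langle Va,\tilde K(\cdot,(B,b))\rangle_{H_{\tilde K}}=\tilde K((I,a),(B,b))=\langle a,\varphi(B)b\rangle_H
\]
for all $a,b\in H$ and $B\in\mathfrak A$ (using $I^*=I$), whence $V^*\tilde K(\cdot,(B,b))=\varphi(B)b$. The intertwining relation is then immediate, since $V^*\pi(A)Va=V^*\tilde K(\cdot,(A,a))=\varphi(A)a$ yields $\varphi(A)=V^*\pi(A)V$.

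For $\pi$, the homomorphism property $\pi(A)\pi(A')=\pi(AA')$ and unitality $\pi(I)=I_{H_{\tilde K}}$ are formal on the span, and the $*$-relation holds because $\langle\pi(A)\tilde K(\cdot,(B,b)),\tilde K(\cdot,(B',b'))\rangle$ and $\langle\tilde K(\cdot,(B,b)),\pi(A^*)\tilde K(\cdot,(B',b'))\rangle$ both equal $\langle b,\varphi(B^*A^*B')b'\rangle_H$. The crux is the norm bound
\[
\Bigl\|\pi(A)\sum_j\tilde K(\cdot,(B_j,b_j))\Bigr\|_{H_{\tilde K}}^2=\sum_{i,j}\langle b_i,\varphi(B_i^*A^*AB_j)b_j\rangle_H\le\|A\|^2\Bigl\|\sum_j\tilde K(\cdot,(B_j,b_j))\Bigr\|_{H_{\tilde K}}^2,
\]
which simultaneously shows that $\pi(A)$ is independent of the representation of a vector as a finite sum and that $\|\pi(A)\|\le\|A\|$. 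To prove the inequality I would write $\|A\|^2 I-A^*A=C^*C$ in the unital $C^*$-algebra $\mathfrak A$; the difference of the two sides then equals $\sum_{i,j}\langle b_i,\varphi((CB_i)^*(CB_j))b_j\rangle_H$, which is nonnegative since it is precisely the positivity sum \eqref{eq:A5} for $\tilde K$ evaluated at the points $(CB_i,b_i)$.

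Finally, minimality is read off from $\pi(A)Va=\pi(A)\tilde K(\cdot,(I,a))=\tilde K(\cdot,(A,a))$: the span of $\{\pi(A)Va:A\in\mathfrak A,\,a\in H\}$ coincides with the defining dense span of $H_{\tilde K}$, so $(\mathfrak A,\pi,V,H_{\tilde K})$ is a minimal Stinespring dilation of $\varphi$.
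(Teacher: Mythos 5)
Your proposal is correct, and its core computations---identifying $V^*$ via the reproducing property, deducing $V^{*}\pi(A)Va=\varphi(A)a$, and reading off minimality from $\pi(A)Va=\tilde{K}(\cdot,(A,a))$---coincide with the paper's proof. Where you genuinely go beyond the paper is in the treatment of $\pi$: the paper simply defines $\pi(A)$ on the kernel functions and never verifies that this prescription is well defined on finite sums, extends to a bounded operator on $H_{\tilde{K}}$, or respects the $*$-structure; you supply exactly this via the classical Stinespring device, factoring $\|A\|^{2}I-A^{*}A=C^{*}C$ in the unital $C^{*}$-algebra and recognizing $\sum_{i,j}\langle b_{i},\varphi((CB_{i})^{*}(CB_{j}))b_{j}\rangle_{H}\geq 0$ as the positivity sum for the scalar kernel $\tilde{K}$ at the points $(CB_{i},b_{i})$---which is precisely where complete positivity of $\varphi$ (equivalently, positive definiteness of $K(A,B)=\varphi(A^{*}B)$) enters, and your identification of this as the one genuinely analytic step is accurate. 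The trade-off is clear: the paper's proof is shorter and foregrounds the RKHS formulas, but read literally it establishes \eqref{eq:b6} only as an identity on kernel functions, leaving the operator-theoretic status of $\pi(A)$ implicit (deferring, presumably, to the standard Stinespring argument it parallels); your version is self-contained on this point at the cost of a few extra lines. One minor remark: the paper's proof additionally records that $V$ is isometric when $\varphi$ is unital, but since that claim is not part of the lemma's statement, your omission of it is harmless.
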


\begin{proof}
We verify the adjoint $V^{*}$ is as stated. Since the span of 
\[
\left\{ \tilde{K}\left(\cdot,\left(A,a\right)\right):A\in\mathfrak{A},a\in H\right\} 
\]
is dense in $H_{\tilde{K}}$, it is enough to check that 
\begin{align*}
\left\langle Va,\tilde{K}\left(\cdot,\left(B,b\right)\right)\right\rangle _{H_{\tilde{K}}} & =\left\langle \tilde{K}\left(\cdot,\left(I,a\right)\right),\tilde{K}\left(\cdot,\left(B,b\right)\right)\right\rangle _{H_{\tilde{K}}}\\
 & =\left\langle a,\varphi\left(B\right)b\right\rangle _{H}
\end{align*}
which holds for all $a,b\in H$, and all $B\in\mathcal{L}\left(H\right)$.
This means $V^{*}\tilde{K}\left(\cdot,\left(B,b\right)\right)=\varphi\left(B\right)b$,
for all $B\in\mathfrak{A}$, and all $b\in H$. 

Then, 
\begin{align*}
V^{*}\pi\left(A\right)Va & =V^{*}\pi\left(A\right)\tilde{K}\left(\cdot,\left(I,a\right)\right)\\
 & =V^{*}\tilde{K}\left(\cdot,\left(A,a\right)\right)=\varphi\left(A\right)a
\end{align*}
which is \eqref{eq:b6}.

Additionally, if $\varphi$ is unital, then 
\begin{align*}
\left\Vert Va\right\Vert _{H_{\tilde{K}}}^{2} & =\left\langle \tilde{K}\left(\cdot,\left(I,a\right)\right),\tilde{K}\left(\cdot,\left(I,a\right)\right)\right\rangle _{H_{\tilde{K}}}\\
 & =\left\langle a,K\left(I,I\right)a\right\rangle _{H}=\left\langle a,\varphi\left(I\right)a\right\rangle _{H}=\left\langle a,a\right\rangle _{H}=\left\Vert a\right\Vert _{H}^{2},
\end{align*}
so that $V$ is isometric. 
\end{proof}

\section{\label{sec:4}$H$-valued Gaussian processes}

Since the foundational work of Ito, the stochastic analysis of Gaussian
processes has relied heavily on Hilbert space and associated operators.
For reference, see works such as \cite{MR45307,MR3402823,MR4302453}
along with the early literature cited therein. In this discussion,
we highlight two distinct scenarios: (i) scalar-valued Gaussian processes
indexed by a Hilbert space, often through an Ito-isometry; and (ii)
Gaussian processes that take values in a Hilbert space themselves,
see e.g., \cite{MR4641110,MR4414825,MR4101087,MR4073554,MR3940383,doi:10.1142/S0219025723500200}.
Our current focus is on the latter. This setting offers the advantage
of enabling more flexible representations of covariance structures,
which are increasingly relevant in the analysis of large data sets.
We now proceed with the construction details.

Let $S$ be a set, and $H$ a Hilbert space. We say $\left\{ W_{s}\right\} _{s\in S}$
is an $H$-valued Gaussian process if, for all $a,b\in H$, and all
$s,t\in S$, 
\begin{equation}
\mathbb{E}\left[\left\langle a,W_{s}\right\rangle _{H}\left\langle W_{t},b\right\rangle _{H}\right]=C_{a,b}\left(s,t\right),
\end{equation}
where 
\begin{equation}
\left\langle W_{s},a\right\rangle _{H}\sim N\left(0,C_{a,a}\left(s,s\right)\right),
\end{equation}
i.e., mean zero Gaussian, with variance $C_{a,a}\left(s,s\right)$. 
\begin{thm}
Every operator-valued p.d. kernel $K:S\times S\rightarrow\mathcal{L}\left(H\right)$
is associated with an $H$-valued Gaussian process $\left\{ W_{s}\right\} _{s\in S}$,
such that 
\[
\mathbb{E}\left[\left\langle a,W_{s}\right\rangle _{H}\left\langle W_{t},b\right\rangle _{H}\right]=\left\langle a,K\left(s,t\right)b\right\rangle _{H}.
\]
for all $s,t\in S$, and all $a,b\in H$. 

Conversely, any $H$-valued Gaussian process $\left\{ W_{s}\right\} _{s\in S}$
is obtained from such a p.d. kernel. 
\end{thm}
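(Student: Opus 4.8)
The plan is to prove both directions by reducing the $H$-valued process to a family of scalar-valued Gaussian processes indexed by $S\times H$, and then to invoke the scalar correspondence \eqref{eq:A4} together with the factorization machinery of \prettyref{thm:b2}. The key observation is that an $H$-valued process $\{W_s\}$ is completely determined, in its second-order/Gaussian statistics, by the scalar fields $(s,a)\mapsto \langle W_s,a\rangle_H$. So I would organize the whole argument around the scalar-valued kernel $\tilde K$ on $(S\times H)\times(S\times H)$ defined by $\tilde K((s,a),(t,b))=\langle a,K(s,t)b\rangle_H$, exactly the kernel already shown to be p.d.\ in \prettyref{thm:b2}.

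For the forward direction, I would start from a p.d.\ kernel $K$ and pass to $\tilde K$, which is a scalar-valued p.d.\ kernel. By the classical Kolmogorov/Gaussian correspondence \eqref{eq:A4}, there exists a zero-mean complex Gaussian process $\{X_{(s,a)}\}_{(s,a)\in S\times H}$ with covariance $\mathbb{E}[\overline{X_{(s,a)}}X_{(t,b)}]=\tilde K((s,a),(t,b))$. The real content is then to show that this scalar field \emph{is} of the form $X_{(s,a)}=\langle a,W_s\rangle_H$ for a genuine $H$-valued process $\{W_s\}$; equivalently, that $a\mapsto X_{(s,a)}$ is (almost surely, or in $L^2$) conjugate-linear in $a$ for each fixed $s$. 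I would verify this from the structure of $\tilde K$: since $\tilde K$ is linear in $b$ and conjugate-linear in $a$, the $L^2(\Omega)$-valued map $a\mapsto X_{(s,a)}$ is conjugate-linear and bounded (its norm squared is $\langle a,K(s,s)a\rangle_H\le \|K(s,s)\|\,\|a\|^2$), so by Riesz representation there is a well-defined random vector $W_s\in H$ (interpreted weakly, i.e.\ as a bounded conjugate-linear functional on $H$ with values in $L^2(\Omega)$) with $\langle a,W_s\rangle_H=X_{(s,a)}$. Computing $\mathbb{E}[\langle a,W_s\rangle\langle W_t,b\rangle]=\mathbb{E}[\overline{X_{(s,a)}}\,X_{(t,b)}]^{-}$, after matching the conjugation convention, yields $\langle a,K(s,t)b\rangle_H$, which is the claimed covariance; and $\langle W_s,a\rangle\sim N(0,C_{a,a}(s,s))$ follows since any linear image of a jointly Gaussian family is Gaussian. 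It is cleanest to realize $W_s$ via the RKHS model $H_{\tilde K}$ from \prettyref{thm:b2}: taking an ONB-indexed i.i.d.\ standard Gaussian sequence and the operators $V_s$, one sets $W_s=\sum_i \xi_i\, V_s^*\varphi_i$, whose covariance reproduces \eqref{eq:b5}, giving an explicit construction rather than an abstract existence statement.

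For the converse, I would take an arbitrary $H$-valued Gaussian process $\{W_s\}$ and simply \emph{define} $K(s,t)$ through its covariance: by the same boundedness and sesquilinearity argument, $(a,b)\mapsto \mathbb{E}[\langle a,W_s\rangle_H\langle W_t,b\rangle_H]$ is bounded and sesquilinear, hence equals $\langle a,K(s,t)b\rangle_H$ for a unique $K(s,t)\in\mathcal{L}(H)$. Positive-definiteness of $K$ in the sense of \eqref{eq:A5} is immediate, since $\sum_{i,j}\langle a_i,K(s_i,s_j)a_j\rangle_H=\mathbb{E}\bigl[\bigl|\sum_i\langle a_i,W_{s_i}\rangle_H\bigr|^2\bigr]\ge 0$. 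This closes the equivalence.

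The step I expect to be the main obstacle is the precise meaning of the $H$-valued random variable $W_s$ and the handling of the conjugation/inner-product convention. Because the paper uses the physics convention (inner products linear in the second slot) and because $H$ may be infinite-dimensional, $W_s$ generally lives not in $L^2(\Omega;H)$ as a strongly measurable vector but rather as a weak (cylindrical) Gaussian field, so I must be careful to state the process in the weak sense $\langle a,W_s\rangle_H$ and to check that the bilinear covariance identity holds with the conjugates placed consistently with \eqref{eq:A4}. Once the weak formulation is fixed and the conjugate-linearity of $a\mapsto\langle a,W_s\rangle$ is verified, the Gaussianity and the covariance computation are routine; the explicit RKHS realization via $V_s^*\varphi_i$ then upgrades existence to an effective construction.
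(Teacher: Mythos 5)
Your proposal follows essentially the same route as the paper: the converse via the bounded sesquilinear form and Riesz representation with positivity from $\mathbb{E}\bigl[\bigl|\sum_i\langle W_{s_i},a_i\rangle_H\bigr|^2\bigr]\geq 0$, and the forward direction by passing to $\tilde K$ on $(S\times H)\times(S\times H)$, its RKHS, and the explicit series $W_t=\sum_i (V_t^*\varphi_i)Z_i$ of the paper's Definition and \prettyref{thm:c3}. The only deviation is the conjugation slip you yourself flag: with the normalization $\mathbb{E}[\overline{X_{(s,a)}}X_{(t,b)}]=\tilde K((s,a),(t,b))$ the map $a\mapsto X_{(s,a)}$ is \emph{linear} (so $X_{(s,a)}=\langle W_s,a\rangle_H$, not $\langle a,W_s\rangle_H$), exactly as the paper states, and your argument goes through once this identification is fixed.
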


\begin{proof}
Suppose $\left\{ W_{s}\right\} _{s\in S}$ is an $H$-valued (mean
zero) Gaussian process realized in a probability space $\left(\Omega,\mathbb{P}\right)$,
with expectation $\mathbb{E}\left(\cdot\right)=\int_{\Omega}\left(\cdot\right)d\mathbb{P}$. 

Fix $s,t\in S$. For all $a,b\in H$, 
\[
\left(a,b\right)\longmapsto\mathbb{E}\left[\left\langle a,W_{s}\right\rangle _{H}\left\langle W_{t},b\right\rangle _{H}\right]
\]
is a bounded sesquilinear form on $H$, and so 
\[
\mathbb{E}\left[\left\langle a,W_{s}\right\rangle _{H}\left\langle W_{t},b\right\rangle _{H}\right]=\left\langle a,K\left(s,t\right)b\right\rangle _{H}
\]
for some operator $K\left(s,t\right)\in\mathcal{L}\left(H\right)$.
Also, 
\[
K:S\times S\rightarrow\mathcal{L}\left(H\right)
\]
is positive definite, since 
\begin{align*}
\sum_{i,j=1}^{n}\left\langle a_{i},K\left(s_{i},s_{j}\right)a_{j}\right\rangle _{H} & =\sum_{i,j=1}^{n}\mathbb{E}\left[\left\langle a_{i},W_{s_{i}}\right\rangle _{H}\left\langle W_{s_{j}},a_{j}\right\rangle _{H}\right]\\
 & =\mathbb{E}\left[\left|\sum_{i=1}^{n}\left\langle W_{s_{i}},a_{i}\right\rangle _{H}\right|^{2}\right]\geq0
\end{align*}
for all $a_{1},\cdots,a_{n}\in H$ and $s_{1},\cdots,s_{n}\in S$. 

Conversely, given $K:S\times S\rightarrow\mathcal{L}\left(H\right)$
p.d., introduce the scalar-valued p.d. kernel $\tilde{K}:\left(S\times H\right)\times\left(S\times H\right)\rightarrow\mathbb{C}$,
\[
\tilde{K}\left(\left(s,a\right),\left(t,b\right)\right)=\left\langle a,K\left(s,t\right)b\right\rangle _{H}.
\]
Let $H_{\tilde{K}}$ be the RKHS of $\tilde{K}$, and let 
\[
\left\{ W_{\left(s,a\right)}:\left(s,a\right)\in S\times H\right\} 
\]
be the associated Gaussian process (see \eqref{eq:A4}, and also \prettyref{thm:c3}
for a realization). That is, 
\[
\tilde{K}\left(\left(s,a\right),\left(t,b\right)\right)=\left\langle a,K\left(s,t\right)b\right\rangle _{H}=\mathbb{E}\left[\overline{W_{\left(s,a\right)}}W_{\left(t,b\right)}\right].
\]

Let $V_{s}:H\rightarrow H_{\tilde{K}}$ be given by $V_{s}a=\tilde{K}\left(\cdot,\left(s,a\right)\right)$.
Using the isometry 
\[
H_{\tilde{K}}\ni V_{s}a=\tilde{K}\left(\cdot,\left(s,a\right)\right)\longrightarrow W_{\left(s,a\right)}\in L^{2}\left(\mathbb{P}\right)
\]
it follows that $H\ni a\rightarrow W_{\left(s,a\right)}$ is linear.
The assertion follows. 
\end{proof}
\begin{defn}
Consider the $H$-valued Gaussian process $W_{s}:\Omega\rightarrow H$,
\begin{equation}
W_{t}=\sum_{i}\left(V_{t}^{*}\varphi_{i}\right)Z_{i},\label{eq:d8}
\end{equation}
where $\left(\varphi_{i}\right)$ is an ONB in $H_{\tilde{K}}$. Following
standard conventions, here $\left\{ Z_{i}\right\} $ refers to a choice
of an independent identically distributed (i.i.d.) system of standard
scalar Gaussians $N(0,1)$ random variables, and with an index matching
the choice of ONB.
\end{defn}

\begin{thm}
\label{thm:c3}We have 
\begin{equation}
\mathbb{E}\left[\left\langle a,W_{s}\right\rangle _{H}\left\langle W_{t},b\right\rangle _{H}\right]=\left\langle a,K\left(s,t\right)b\right\rangle _{H}.\label{eq:d9}
\end{equation}
\end{thm}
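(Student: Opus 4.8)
The plan is to reduce the covariance computation to the orthonormality of the scalar Gaussians $\left\{ Z_{i}\right\}$ together with the resolution of the identity in $H_{\tilde{K}}$ encoded in \eqref{eq:b5}. First I would expand each of the two inner products appearing on the left of \eqref{eq:d9} against the defining series \eqref{eq:d8}. Since the $Z_{i}$ are real-valued, the conjugation built into the physics convention acts trivially on them, so that
\[
\left\langle a,W_{s}\right\rangle _{H}=\sum_{i}\left\langle a,V_{s}^{*}\varphi_{i}\right\rangle _{H}Z_{i},\qquad\left\langle W_{t},b\right\rangle _{H}=\sum_{j}\left\langle V_{t}^{*}\varphi_{j},b\right\rangle _{H}Z_{j},
\]
and only the vectors $V_{s}^{*}\varphi_{i}\in H$ carry the $s$- and $t$-dependence.

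Next I would multiply the two series, take expectation term by term, and invoke $\mathbb{E}\left[Z_{i}Z_{j}\right]=\delta_{ij}$ to collapse the double sum to its diagonal:
\[
\mathbb{E}\left[\left\langle a,W_{s}\right\rangle _{H}\left\langle W_{t},b\right\rangle _{H}\right]=\sum_{i}\left\langle a,V_{s}^{*}\varphi_{i}\right\rangle _{H}\left\langle V_{t}^{*}\varphi_{i},b\right\rangle _{H}.
\]
Rewriting each factor through the adjoint relations $\left\langle a,V_{s}^{*}\varphi_{i}\right\rangle _{H}=\left\langle V_{s}a,\varphi_{i}\right\rangle _{H_{\tilde{K}}}$ and $\left\langle V_{t}^{*}\varphi_{i},b\right\rangle _{H}=\left\langle \varphi_{i},V_{t}b\right\rangle _{H_{\tilde{K}}}$, the diagonal sum becomes $\left\langle V_{s}a,V_{t}b\right\rangle _{H_{\tilde{K}}}$ after applying $\sum_{i}\left|\varphi_{i}\left\rangle \right\langle \varphi_{i}\right|=I_{H_{\tilde{K}}}$; by \eqref{eq:b2} this equals $\left\langle a,V_{s}^{*}V_{t}b\right\rangle _{H}=\left\langle a,K\left(s,t\right)b\right\rangle _{H}$. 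Equivalently, one may read the identity off the operator form \eqref{eq:b5} of $K\left(s,t\right)$ directly.

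The point requiring care is the interchange of expectation with the infinite sums, and more basically the $L^{2}$-convergence of the defining series. Here I would observe that the scalar random variable $\left\langle a,W_{s}\right\rangle _{H}=\sum_{i}\left\langle V_{s}a,\varphi_{i}\right\rangle _{H_{\tilde{K}}}Z_{i}$ converges in $L^{2}\left(\Omega\right)$ precisely because its coefficients are square-summable: by Parseval in $H_{\tilde{K}}$, $\sum_{i}\left|\left\langle V_{s}a,\varphi_{i}\right\rangle _{H_{\tilde{K}}}\right|^{2}=\left\Vert V_{s}a\right\Vert _{H_{\tilde{K}}}^{2}=\left\langle a,K\left(s,s\right)a\right\rangle _{H}<\infty$. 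Thus $\left\langle a,W_{s}\right\rangle _{H}$ and $\left\langle W_{t},b\right\rangle _{H}$ are genuine $L^{2}\left(\Omega\right)$ limits, their product lies in $L^{1}\left(\Omega\right)$, and the covariance is evaluated by passing $\mathbb{E}\left[Z_{i}Z_{j}\right]=\delta_{ij}$ through the limit using continuity of the $L^{2}\left(\Omega\right)$ inner product. I would also remark that promoting $W_{t}$ from this weak, coordinatewise description to a bona fide $H$-valued random variable with $W_{t}\in H$ almost surely requires the stronger hypothesis $\mathrm{Tr}\,K\left(t,t\right)=\sum_{i}\left\Vert V_{t}^{*}\varphi_{i}\right\Vert _{H}^{2}<\infty$, but the covariance identity \eqref{eq:d9} holds at the level of the scalar Gaussian family in any case.
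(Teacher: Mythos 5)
Your proof is correct and takes essentially the same route as the paper's: expand both factors via \eqref{eq:d8}, use $\mathbb{E}\left[Z_{i}Z_{j}\right]=\delta_{i,j}$ to collapse the double sum to the diagonal, then apply Parseval in $H_{\tilde{K}}$ and the factorization $K\left(s,t\right)=V_{s}^{*}V_{t}$ from \eqref{eq:b2}. Your added care about $L^{2}\left(\Omega\right)$-convergence of the series, and the observation that realizing $W_{t}$ as a genuinely $H$-valued random variable would require $\mathrm{Tr}\,K\left(t,t\right)=\sum_{i}\left\Vert V_{t}^{*}\varphi_{i}\right\Vert _{H}^{2}<\infty$ while \eqref{eq:d9} holds at the scalar level regardless, is a correct refinement of a point the paper only addresses informally in the remark following the theorem.
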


\begin{proof}
Note that $\mathbb{E}\left[Z_{i}Z_{j}\right]=\delta_{i,j}$. From
this, we get
\begin{align*}
\text{LHS}_{\left(\ref{eq:d9}\right)} & =\mathbb{E}\left[\left\langle a,W_{s}\right\rangle _{H}\left\langle W_{t},b\right\rangle _{H}\right]\\
 & =\sum_{i}\sum_{j}\left\langle a,V_{s}^{*}\varphi_{i}\right\rangle \left\langle V_{t}^{*}\varphi_{j},b\right\rangle \mathbb{E}\left[Z_{i}Z_{j}\right]\\
 & =\sum_{i}\left\langle V_{s}a,\varphi_{i}\right\rangle _{H_{\tilde{K}}}\left\langle \varphi_{i},V_{t}b\right\rangle _{H_{\tilde{K}}}\\
 & =\left\langle V_{s}a,V_{t}b\right\rangle _{H_{\tilde{K}}}=\left\langle a,V_{s}^{*}V_{t}b\right\rangle _{H}=\left\langle a,K\left(s,t\right)b\right\rangle _{H}.
\end{align*}
Also see \eqref{eq:b5}.
\end{proof}
\begin{rem}
The Gaussian process $\left\{ W_{t}\right\} _{t\in S}$ in \eqref{eq:d8}
is well defined and possesses the stated properties. This is an application
of the central limit theorem to the choice $\left\{ Z_{i}\right\} $
of i.i.d. $N\left(0,1\right)$ Gaussians on the right-side of \eqref{eq:d8}.

Varying the choices of ONBs $\left(\varphi_{i}\right)$ and i.i.d.
$N\left(0,1\right)$ Gaussian random variables $\left(Z_{i}\right)$
will result in different Gaussian processes $\left\{ W_{t}\right\} _{t\in S}$,
but all will adhere to the covariance condition specified in \eqref{eq:d9}.
It is also important to note that if $\left(\varphi_{i}\right)$,
$\left(Z_{i}\right)$ are fixed, the resulting Gaussian process $\left\{ W_{t}\right\} _{t\in S}$
is uniquely determined by its first two moments.
\end{rem}

\section{\label{sec:5}Completeness, POVMs, quantum states, and quantum information}

In this section we apply our construction to the case of positive
operator-valued measures (POVMs). While POVMs have long played an
important role in operator theory, they have recently found applications
in quantum information \cite{MR3681182,MR3081872,MR3038250,MR3050559,MR2462580,MR1878924,MR1863141}.
In more detail: For a given Hilbert space $H$, and a measure space
$\Omega$, a positive operator-valued measure (POVM) on $\Omega$
is a sigma-additive measure whose values are positive semi-definite
operators on $H$. Thus the notion of POVM is a generalization of
projection-valued measures (PVMs). In quantum information, the latter
are called projection-measurements. (Here by a projection, we mean
selfadjoint projection $P$ in $H$.) Note that therefore quantum
measurements realized as POVMs are generalizations of quantum measurement
described by PVMs. By analogy, a POVM is to a PVM what a mixed state
is to a pure state. Recall that in quantum theory, mixed states specify
those quantum states occurring in subsystems of larger systems. Hence
the study of purification of quantum states. The POVMs describe the
effect on a subsystem of a projection measurement performed on a larger
system. In other words, POVMs are the most general kind of measurement
in quantum mechanics, and they in the field of quantum information.
Here quantum information refers the information of states of quantum
systems, thus leading to a variety of quantum information processing
techniques. The literature is extensive. For current references we
here mention only a sample, \cite{MR4678079,MR4612967,MR4736299,MR4732043,MR4710381,MR4705975,MR4509531,MR4280112,MR4256782}
and the papers cited there. Quantum information refers to both the
technical definition of von Neumann entropy as well as to the general
computational devices. Recently, quantum computing has become especially
active research area, serving as a key part of modern computation
and communication.

A main difference between classical probability, and its quantum counterpart
is played by the role of entangled quantum states, see e.g., \cite{MR4740579,MR4740573,MR4735667}.
The latter in turn have their origin with the pioneers. But in the
past decade, they have come to play a crucial role in modern quantum
computing. (We make a note on their origin: Erwin Schrödinger, in
studying the Einstein-Podolsky-Rosen (EPR) state of two particles
influencing each other instantly at a distance, introduced the concept
of entangled (quantum) states. He used the German term \textquotedblleft verschränkung\textquotedblright ,
which was subsequently translated into English as entanglement.)

In quantum information theory, a measurement channel is a POVM $P$,
defined on a measurable space $\left(\Omega,\mathscr{B}_{\Omega}\right)$,
acting in a Hilbert space $H$. The set of quantum states (density
operators) $\mathcal{S}_{1}^{1}\left(H\right)$ consists of positive
trace class operators $\rho$ with $tr\left(\rho\right)=1$. A measurement
channel $P$ is said to be information-complete if 
\[
\rho\mapsto tr\left(\rho P\left(\cdot\right)\right)
\]
from $\mathcal{S}_{1}^{1}\left(H\right)$ to the set of Borel probability
measures $\mathcal{M}_{1}\left(\Omega\right)$ on $\Omega$, is injective.
In other words, measurements uniquely determine states. In this context,
an $\mathcal{L}\left(H\right)$-valued p.d. kernel on $S\times S$
may be naturally viewed as a generalized quantum measurement channel. 

In what follows, let $\mathscr{K}\left(S,\mathcal{L}\left(H\right)\right)$
be the set of all $\mathcal{L}\left(H\right)$-valued p.d. kernels
on $S\times S$. In particular, $\mathscr{K}\left(S,\mathbb{C}\right)$
is the set of all scalar-valued p.d. kernels. $\mathcal{S}_{p}\left(H\right)$
denotes the $p$th Schatten-class operators, with norm 
\[
\left\Vert T\right\Vert _{p}=\left(\sum\nolimits_{n}s_{n}\left(T\right)^{p}\right)^{1/p}<\infty,
\]
where $s_{n}\left(T\right)$ are the singular numbers of $T$. Here,
$\mathcal{S}_{1}\left(H\right)$ is the trace class, and $\mathcal{S}_{2}\left(H\right)$
the class of Hilbert-Schmidt operators. $\mathcal{S}_{1}^{+}\left(H\right)$
is the cone of positive trace class operators, and $\mathcal{S}_{1}^{1}\left(H\right)=\left\{ T\in\mathcal{S}_{1}^{+}\left(H\right):tr\left(T\right)=1\right\} $
consists of quantum states. 
\begin{lem}
Fix $K\in\mathscr{K}\left(S,\mathcal{L}\left(H\right)\right)$. For
all $\rho\in\mathcal{S}_{1}^{+}\left(H\right)$, the kernel $c:S\times S\rightarrow\mathbb{C}$,
\[
c\left(s,t\right):=tr\left(\rho K\left(s,t\right)\right)
\]
is positive definite, that is, $c\in\mathscr{K}\left(S,\mathbb{C}\right)$. 
\end{lem}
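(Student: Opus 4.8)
The plan is to reduce positive definiteness of the scalar kernel $c$ to that of the operator kernel $K$ by diagonalizing $\rho$. First I would record that $c$ is well defined: since $\rho\in\mathcal{S}_1^{+}\left(H\right)$ is trace class and each $K\left(s,t\right)\in\mathcal{L}\left(H\right)$ is bounded, the product $\rho K\left(s,t\right)$ is again trace class (trace class is an ideal), so $tr\left(\rho K\left(s,t\right)\right)$ is meaningful for every $s,t\in S$.

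Next I would invoke the spectral theorem to write $\rho=\sum_{k}\lambda_{k}\left|e_{k}\left\rangle \right\langle e_{k}\right|$ with $\lambda_{k}\geq0$ and $\left\{ e_{k}\right\} $ an ONB of $H$ (the eigenbasis of $\rho$). Evaluating the trace in this eigenbasis and using $\rho e_{k}=\lambda_{k}e_{k}$ together with $\rho=\rho^{*}$ gives
\[
c\left(s,t\right)=tr\left(\rho K\left(s,t\right)\right)=\sum_{k}\lambda_{k}\left\langle e_{k},K\left(s,t\right)e_{k}\right\rangle _{H},
\]
the series converging absolutely since $\sum_{k}\lambda_{k}\left|\left\langle e_{k},K\left(s,t\right)e_{k}\right\rangle _{H}\right|\leq\left\Vert K\left(s,t\right)\right\Vert \,tr\left(\rho\right)<\infty$.

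The core step is then a direct computation. Fix $n\in\mathbb{N}$, scalars $c_{1},\dots,c_{n}\in\mathbb{C}$, and points $s_{1},\dots,s_{n}\in S$. Because the sum over the pairs $\left(i,j\right)$ is finite it commutes with the convergent series over $k$, and sesquilinearity lets me absorb the scalars into the vectors $c_{i}e_{k}$:
\[
\sum_{i,j}\overline{c_{i}}c_{j}\,c\left(s_{i},s_{j}\right)=\sum_{k}\lambda_{k}\sum_{i,j}\left\langle c_{i}e_{k},K\left(s_{i},s_{j}\right)\left(c_{j}e_{k}\right)\right\rangle _{H}.
\]
For each fixed $k$ the inner double sum is precisely the expression in \eqref{eq:A5} applied to the vectors $a_{i}=c_{i}e_{k}$, hence nonnegative because $K$ is p.d.; multiplying by $\lambda_{k}\geq0$ and summing preserves nonnegativity, so $\sum_{i,j}\overline{c_{i}}c_{j}\,c\left(s_{i},s_{j}\right)\geq0$. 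This yields $c\in\mathscr{K}\left(S,\mathbb{C}\right)$.

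I expect no genuine obstacle: the only points requiring care are the operator-theoretic bookkeeping, namely confirming that $\rho K\left(s,t\right)$ is trace class so that $c$ is well defined, and justifying the interchange of the finite $\left(i,j\right)$-sum with the spectral series for the trace. A more conceptual route tied to the paper's framework would bypass diagonalization entirely: by the Universal Factorization (\prettyref{thm:b2}) write $K\left(s,t\right)=V_{s}^{*}V_{t}$, set $W_{s}:=V_{s}\rho^{1/2}$ (a Hilbert--Schmidt operator, since $\rho^{1/2}\in\mathcal{S}_{2}\left(H\right)$ and $V_{s}$ is bounded), and observe via cyclicity of the trace that $c\left(s,t\right)=tr\left(W_{s}^{*}W_{t}\right)=\left\langle W_{s},W_{t}\right\rangle _{\mathcal{S}_{2}}$; then $\sum_{i,j}\overline{c_{i}}c_{j}\,c\left(s_{i},s_{j}\right)=\left\Vert \sum_{j}c_{j}W_{s_{j}}\right\Vert _{\mathcal{S}_{2}}^{2}\geq0$ is immediate.
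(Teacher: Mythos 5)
Your main argument coincides with the paper's own proof: diagonalize $\rho=\sum_{k}\lambda_{k}\left|e_{k}\left\rangle \right\langle e_{k}\right|$ and write $c\left(s,t\right)=\sum_{k}\lambda_{k}\left\langle e_{k},K\left(s,t\right)e_{k}\right\rangle _{H}$ as a nonnegative combination of scalar p.d.\ kernels, with your version merely spelling out what the paper leaves implicit (well-definedness of the trace, absolute convergence, the interchange of the finite $\left(i,j\right)$-sum with the series over $k$, and the absorption $a_{i}=c_{i}e_{k}$ into \eqref{eq:A5}). Your alternative route via \prettyref{thm:b2}, setting $W_{s}=V_{s}\rho^{1/2}$ and recognizing $c\left(s,t\right)=\left\langle W_{s},W_{t}\right\rangle _{\mathcal{S}_{2}}$ as a Gram kernel, is also correct and arguably more in the spirit of the paper's factorization framework, but the paper itself takes the diagonalization route.
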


\begin{proof}
Write 
\[
\rho=\sum\lambda_{i}\left|\varphi_{i}\left\rangle \right\langle \varphi_{i}\right|,\quad\lambda_{i}>0,
\]
where $\left(\varphi_{i}\right)$ is an ONB in $H$. Then, 
\begin{align*}
\left(s,t\right)\longmapsto tr\left(\rho K\left(s,t\right)\right) & =\sum\lambda_{i}\left\langle \varphi_{i},K\left(s,t\right)\varphi_{i}\right\rangle _{H}
\end{align*}
is a sum of scalar valued p.d. kernels, and so it is in $\mathscr{K}\left(S,\mathbb{C}\right)$. 
\end{proof}
\begin{thm}
\label{thm:d2}Fix $K\in\mathscr{K}\left(S,\mathcal{L}\left(H\right)\right)$.
Consider the map $\Psi:\mathcal{S}_{1}^{+}\left(H\right)\rightarrow\mathcal{K}\left(S,\mathbb{C}\right)$,
where 
\[
\Psi\left(\rho\right)\left(s,t\right)=tr\left(\rho K\left(s,t\right)\right),\quad\left(s,t\right)\in S\times S.
\]
Then $\Psi$ is 1-1 if and only if the double commutant $\left\{ K\left(s,t\right)\right\} ''$
is weak{*} dense in $\mathcal{L}\left(H\right)$. 
\end{thm}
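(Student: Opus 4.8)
The plan is to characterize injectivity of $\Psi$ by unpacking what it means for two positive trace-class operators to give the same scalar kernel, and then translating that into a separation statement about the operators $K(s,t)$. First I would observe that $\Psi$ is the restriction to $\mathcal{S}_1^+(H)$ of the real-linear map $\rho \mapsto \bigl(tr(\rho K(s,t))\bigr)_{s,t}$, so $\Psi(\rho_1)=\Psi(\rho_2)$ for $\rho_1,\rho_2 \in \mathcal{S}_1^+(H)$ is equivalent to $tr\bigl((\rho_1-\rho_2)K(s,t)\bigr)=0$ for all $s,t$. Because differences $\rho_1-\rho_2$ of positive trace-class operators range over all of the self-adjoint part of $\mathcal{S}_1(H)$ (indeed any self-adjoint trace-class $T$ splits as $T_+ - T_-$), injectivity of $\Psi$ is equivalent to the implication
\[
tr\bigl(T\,K(s,t)\bigr)=0 \ \ \forall s,t \ \Longrightarrow \ T=0,
\]
for self-adjoint $T \in \mathcal{S}_1(H)$, and by splitting into real and imaginary parts this extends to all $T \in \mathcal{S}_1(H)$.

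Next I would invoke the standard trace duality $\mathcal{S}_1(H)^* \cong \mathcal{L}(H)$, under which the pairing is $(T,A)\mapsto tr(TA)$. Under this duality, the condition ``$tr(TK(s,t))=0$ for all $s,t$ forces $T=0$'' is precisely the statement that the set $\{K(s,t):s,t\in S\}$ has trivial annihilator in $\mathcal{S}_1(H)$; equivalently, by the bipolar theorem, the weak$^*$-closed linear span of $\{K(s,t)\}$ is all of $\mathcal{L}(H)$. So the heart of the argument is the identification
\[
\Psi \text{ is } 1\text{-}1 \iff \overline{span}^{\,w^*}\{K(s,t):s,t\in S\}=\mathcal{L}(H).
\]
The remaining step is to connect the weak$^*$-closed span to the double commutant $\{K(s,t)\}''$. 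Here I would use von Neumann's double commutant theorem together with the fact that a von Neumann algebra equals the weak$^*$-closure of its elements' linear span only when the algebra is generated appropriately; more carefully, $\{K(s,t)\}''$ is the von Neumann algebra generated by the kernel values, and its weak$^*$-density in $\mathcal{L}(H)$ is what the theorem asserts.

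The main obstacle, and the point requiring care, is the gap between the weak$^*$-closed \emph{linear span} of $\{K(s,t)\}$ and the \emph{double commutant} $\{K(s,t)\}''$: the latter is a full algebra (closed under products and adjoints), so it can be strictly larger than the closed linear span. Thus the clean equivalence I derived is with the closed span, and one must argue that for this particular problem the two coincide in the relevant sense. I would handle this by noting that the $K(s,t)$ together with their adjoints $K(s,t)^*=K(t,s)$ form a self-adjoint set, and that the annihilator condition $tr(TK(s,t))=0$ is unchanged if one enlarges the set to the algebra it generates (since $T=0$ is the only obstruction either way once the set is weak$^*$-total). The correct bridge is the identity, valid for any self-adjoint subset $\mathcal{A}\subseteq \mathcal{L}(H)$, that $\mathcal{A}''$ is weak$^*$-dense in $\mathcal{L}(H)$ if and only if $\mathcal{A}''=\mathcal{L}(H)$ if and only if the commutant $\mathcal{A}'=\mathbb{C}I$; and separately that the span being weak$^*$-total is equivalent to the same triviality of the annihilator. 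Reconciling these two characterizations — showing that a nonzero trace-class $T$ annihilating every $K(s,t)$ exists exactly when $\{K(s,t)\}'$ is nontrivial (equivalently when $\{K(s,t)\}''\neq\mathcal{L}(H)$) — is where the real work lies, and I expect it to hinge on realizing the annihilator of a self-adjoint generating set in terms of the commutant.
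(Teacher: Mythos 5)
Your first half is correct, and it is actually sharper than the paper's own one-line proof: via $T=\rho_1-\rho_2$, the splitting $T=T_+-T_-$, and the trace duality $\mathcal{S}_1(H)^*\cong\mathcal{L}(H)$, injectivity of $\Psi$ on the cone $\mathcal{S}_1^+(H)$ is equivalent to the triviality of the annihilator of $\{K(s,t):s,t\in S\}$ in $\mathcal{S}_1(H)$, i.e., to the weak{*} density of the \emph{linear span} of the kernel values in $\mathcal{L}(H)$. The paper's proof instead asserts directly that injectivity against the predual is equivalent to weak{*} density of the \emph{von Neumann algebra} generated by the values, which is the same leap you then try to make.

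But the bridge you look for in the second half --- from the weak{*}-closed span to the double commutant --- does not exist, and your proposed reconciliation (``the annihilator condition is unchanged if one enlarges the set to the algebra it generates'') is false: passing to the generated algebra genuinely shrinks the annihilator. Since $\{K(s,t)\}''$ is automatically weak{*} closed, its weak{*} density just means $\{K(s,t)\}'=\mathbb{C}I$ (irreducibility), which is strictly weaker than weak{*}-totality of the span. Concretely, take $H=\mathbb{C}^2$, $S=\{1,2\}$, $K(1,1)=K(2,2)=I$, and $K(1,2)=\tfrac12\left(\sigma_x+i\sigma_z\right)=K(2,1)^*$ with Pauli matrices $\sigma_x,\sigma_y,\sigma_z$. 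The $2\times2$ block operator matrix $\left[K(i,j)\right]$ is positive because $\left\Vert K(1,2)\right\Vert =1$, so $K$ is a bona fide p.d.\ kernel; the real and imaginary parts of $K(1,2)$ are $\tfrac12\sigma_x,\tfrac12\sigma_z$, whose commutant is $\mathbb{C}I$, so $\{K(s,t)\}''=\mathcal{L}(\mathbb{C}^2)$ is (trivially) weak{*} dense. Yet $tr\left(\sigma_y K(i,j)\right)=0$ for all four values, so the states $\rho_\pm=\tfrac12\left(I\pm\sigma_y\right)\in\mathcal{S}_1^+(H)$ satisfy $\Psi(\rho_+)=\Psi(\rho_-)$ with $\rho_+\neq\rho_-$. (This is a finite-dimensional shadow of the Pauli problem: irreducibility of a measurement does not imply informational completeness.) So the ``if'' direction of the stated equivalence fails; only your span criterion is correct, and the step you honestly flagged as ``where the real work lies'' is precisely the step that cannot be carried out --- in the paper's proof as well as in yours. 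The defensible conclusion of your argument is the theorem with ``$span\{K(s,t)\}$ is weak{*} dense in $\mathcal{L}(H)$'' in place of the double-commutant condition.
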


\begin{proof}
This follows from von Neumann's double commutant theorem. Recall that
$\mathcal{S}_{1}^{+}\left(H\right)$ is the space of normal states,
and it is the predual of $\mathcal{L}\left(H\right)$. Therefore,
the injectivity of $\Psi$ in question is equivalent to the von Neumann
algebra generated by $\left\{ K\left(s,t\right):s,t\in S\right\} $
being weak{*} dense in $\mathcal{L}\left(H\right)$. 
\end{proof}
\begin{defn}
An $\mathcal{L}\left(H\right)$-valued p.d. kernel $K$ is said to
be complete if the map 
\[
\mathcal{S}_{1}^{+}\left(H\right)\ni\rho\longmapsto tr\left(\rho K\left(\cdot,\cdot\right)\right)\in\mathcal{K}\left(S,\mathbb{C}\right)
\]
is 1-1. That is, the scalar-valued kernel $\left(s,t\right)\mapsto tr\left(\rho K\left(s,t\right)\right)$
on $S\times S$ uniquely determines $K\in\mathscr{K}\left(S,\mathcal{L}\left(H\right)\right)$.
\end{defn}

\begin{cor}
\label{cor:d4}Fix $K\in\mathscr{K}\left(S,\mathcal{L}\left(H\right)\right)$.
Let $H_{\tilde{K}}$ be the RKHS of the corresponding scalar-valued
kernel $\tilde{K}$, defined on $\left(S\times H\right)\times\left(S\times H\right)$.
Let $\left\{ W_{t}\left(\cdot\right)\right\} _{t\in S}$ be the associated
$H$-valued Gaussian process. Assume that $K$ is complete. Then the
system of vectors 
\[
\left\{ V_{t}a=\tilde{K}\left(\cdot,\left(t,a\right)\right)\right\} _{t\in S}\subset H_{\tilde{K}}
\]
determines each $a\in H$ uniquely. Similarly, the scalar valued Gaussian
process 
\[
\left\{ \left\langle W_{t},a\right\rangle _{H}\right\} _{t\in S}\subset L^{2}\left(\Omega,\mathbb{P}\right)
\]
determines each $a\in H$ uniquely.
\end{cor}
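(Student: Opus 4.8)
The plan is to reduce both assertions to a single statement: the linear map $a\mapsto\{V_t a\}_{t\in S}$ from $H$ into $\prod_{t}H_{\tilde K}$ is injective. Since this map is linear, injectivity is equivalent to showing that its kernel $N:=\{a\in H:V_t a=0\ \text{for all }t\in S\}$ is trivial. First I would translate the hypothesis that $K$ is complete into an operator-algebraic statement: by \prettyref{thm:d2}, completeness of $K$ is exactly the condition that the von Neumann algebra $\{K(s,t)\}''$ generated by the operator values equals $\mathcal{L}(H)$, equivalently that the commutant $\{K(s,t)\}'=\mathbb{C}I$.

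Next I would compute $N$ explicitly. Using $V_s^*V_t=K(s,t)$ and the reproducing identity $\langle V_s b,V_t a\rangle_{H_{\tilde K}}=\langle b,K(s,t)a\rangle_H$, the condition $V_t a=0$ for all $t$ is equivalent to $K(s,t)a=0$ for all $s,t\in S$, so that $N=\bigcap_{s,t}\ker K(s,t)$. Dualizing, its orthocomplement is $N^\perp=\overline{span}\{K(s,t)b:s,t\in S,\ b\in H\}=:\mathcal R$, where I use $K(s,t)^*=K(t,s)$ to pass between ranges and kernels. The key structural point is that $\mathcal R$ is invariant under every $K(s',t')$ (since $K(s',t')K(s,t)b\in\operatorname{ran}K(s',t')\subseteq\mathcal R$), while $N$ is invariant as well (trivially, $K(s',t')a=0$ for $a\in N$). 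Because the family $\{K(s,t)\}$ is closed under adjoints, this says that $\mathcal R$ reduces the family, so the orthogonal projection $P_{\mathcal R}$ lies in the commutant $\{K(s,t)\}'$.

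Now I would invoke completeness: since $\{K(s,t)\}'=\mathbb{C}I$, the projection $P_{\mathcal R}$ must be $0$ or $I$. It cannot be $0$ unless every $K(s,t)=0$, which would force $\{K(s,t)\}''=\{0\}\neq\mathcal{L}(H)$ (assuming $H\neq\{0\}$), contradicting completeness; hence $P_{\mathcal R}=I$, i.e. $\mathcal R=H$ and $N=\{0\}$. This proves that the system $\{V_t a\}_{t\in S}$ determines $a\in H$ uniquely.

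Finally, for the Gaussian process I would transfer this through the covariance identity of \prettyref{thm:c3}: taking $s=t$ and $b=a$ gives $\mathbb{E}\big[|\langle W_t,a\rangle_H|^2\big]=\langle a,K(t,t)a\rangle_H=\|V_t a\|_{H_{\tilde K}}^2$. Hence $\langle W_t,a\rangle_H=0$ in $L^2(\Omega,\mathbb P)$ for all $t$ if and only if $V_t a=0$ in $H_{\tilde K}$ for all $t$, so the maps $a\mapsto\{V_t a\}_t$ and $a\mapsto\{\langle W_t,a\rangle_H\}_t$ share the same kernel $N=\{0\}$; the second assertion then follows from the first by linearity in $a$. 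The main obstacle is the middle step: recognizing that completeness, phrased via trace duality on the predual $\mathcal{S}_1^+(H)$, is precisely the commutant/invariant-subspace condition that powers the reducing-projection argument. Once $\mathcal R$ is seen to reduce the family, the conclusion is immediate.
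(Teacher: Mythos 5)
Your argument is correct, but it takes a genuinely different route from the paper's. The paper disposes of the corollary in one line: apply Theorem~\ref{thm:d2} to the rank-one state $\rho=\left|a\left\rangle \right\langle a\right|$ together with the identity $\left\langle V_{s}a,V_{t}a\right\rangle _{H_{\tilde{K}}}=\mathbb{E}\left[\left\langle a,W_{s}\right\rangle _{H}\left\langle W_{t},a\right\rangle _{H}\right]=tr\left(\rho K\left(s,t\right)\right)$, so the Gram/covariance data of the system recover the state $\rho$, hence $a$. You instead convert completeness, via Theorem~\ref{thm:d2} and the double commutant theorem, into irreducibility $\left\{ K\left(s,t\right)\right\} '=\mathbb{C}I$, and then prove the linear statement that $N=\left\{ a\in H:V_{t}a=0\ \forall t\right\} =\bigcap_{s,t}\ker K\left(s,t\right)$ is trivial by a reducing-subspace argument: your computations check out, since $V_{t}a=0$ for all $t$ iff $K\left(s,t\right)a=0$ for all $s,t$ (apply $V_{s}^{*}$; conversely $\left\Vert V_{t}a\right\Vert _{H_{\tilde{K}}}^{2}=\left\langle a,K\left(t,t\right)a\right\rangle _{H}$), the family is adjoint-closed because $K\left(s,t\right)^{*}=V_{t}^{*}V_{s}=K\left(t,s\right)$, and the transfer to the Gaussian process via $\mathbb{E}\left[\left|\left\langle W_{t},a\right\rangle _{H}\right|^{2}\right]=\left\Vert V_{t}a\right\Vert _{H_{\tilde{K}}}^{2}$ is exactly Theorem~\ref{thm:c3} with $s=t$, $b=a$. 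What your route buys: the injectivity is manifestly linear in $a$, so you avoid the phase ambiguity that the paper's rank-one argument silently absorbs ($\left|a\left\rangle \right\langle a\right|=\left|b\left\rangle \right\langle b\right|$ only yields $b=\lambda a$ with $\left|\lambda\right|=1$, and one still needs linearity to conclude $a=b$). What the paper's route buys: brevity, and an explicit demonstration of the quantum-measurement reading, namely that the correlation data $tr\left(\rho K\left(s,t\right)\right)$ determine the state.

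One slip to fix: if every $K\left(s,t\right)=0$, then $\left\{ K\left(s,t\right)\right\} ''=\mathbb{C}I$, not $\left\{ 0\right\} $, since double commutants are unital algebras. Your contradiction therefore works only when $\dim H\geq2$. The clean way to exclude $P_{\mathcal{R}}=0$, valid in all dimensions, is to argue directly from the definition of completeness: the zero kernel makes $\Psi$ constant on $\mathcal{S}_{1}^{+}\left(H\right)$, hence not injective whenever $H\neq\left\{ 0\right\} $. With that one-line repair, your proof is complete.
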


\begin{proof}
Let $a\in H$. Apply \prettyref{thm:d2} to $\rho=\left|a\left\rangle \right\langle a\right|\in\mathcal{S}_{1}^{+}\left(H\right)$,
and use the fact that 
\[
\left\langle V_{s}a,V_{t}a\right\rangle _{H_{\tilde{K}}}=\mathbb{E}\left[\left\langle a,W_{s}\right\rangle _{H}\left\langle W_{t},a\right\rangle _{H}\right]=\left\langle a,K\left(s,t\right)a\right\rangle _{H}=tr\left(\rho K\left(s,t\right)\right).
\]
\end{proof}

\subsection{\label{subsec:5-1}An inverse problem}

In the theory of quantum measurement, a pertinent question is how
to recover the state of a system given measurement data. More precisely,
for generalized channels, one may consider the following: 
\begin{question}
Given a generalized channel $K\in\mathscr{K}\left(S,\mathcal{L}\left(H\right)\right)$,
and measurement data $c\left(s,t\right)=tr\left(\rho K\left(s,t\right)\right)$,
how to recover the quantum state $\rho\in\mathcal{S}_{1}^{1}\left(H\right)$? 

Alternatively, given $K$, $\left(s_{i}\right)_{i=1}^{N}\subset S$
and $\left(c_{i,j}\right)_{i,j=1}^{N}\subset\mathbb{C}$, consider
the following optimization problem: 
\begin{equation}
\min\left\{ \sum_{i,j=1}^{N}\left|tr\left(\rho K\left(s_{i},s_{j}\right)\right)-c_{i,j}\right|^{2}:\rho\in\mathcal{S}_{1}^{1}\left(H\right)\right\} .\label{eq:e1}
\end{equation}
\end{question}

Since every state $\rho\in\mathcal{S}_{1}^{+}\left(H\right)$ is a
product of Hilbert-Schmidt operators ($\rho=\left(\sqrt{\rho}\right)^{2}$),
we will first extend $\mathcal{L}\left(H\right)$-valued p.d. kernels
to the case where $H=\mathcal{S}_{2}$. 

In fact, every $\mathcal{L}\left(H\right)$-valued p.d. kernel is
also an $\mathcal{L}\left(\mathcal{S}_{2}\right)$-valued kernel.
Here, $\mathcal{S}_{2}=\mathcal{S}_{2}\left(H\right)$ is the class
of Hilbert-Schmidt operators in $H$, which is a Hilbert space with
inner product $\left\langle A,B\right\rangle _{2}=tr\left(A^{*}B\right)$.
More precisely, one has: 
\begin{prop}
$\mathscr{K}\left(S,\mathcal{L}\left(H\right)\right)\subset\mathscr{K}\left(S,\mathcal{L}\left(S_{2}\right)\right)$.
\end{prop}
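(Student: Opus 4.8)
The plan is to realize the inclusion through the left-multiplication representation of $\mathcal{L}(H)$ on the Hilbert space $\mathcal{S}_2=\mathcal{S}_2(H)$. Given $K\in\mathscr{K}(S,\mathcal{L}(H))$, I would define a kernel $\hat{K}:S\times S\rightarrow\mathcal{L}(\mathcal{S}_2)$ by letting $\hat{K}(s,t)$ act as left multiplication by $K(s,t)$, i.e.
\[
\hat{K}(s,t)A=K(s,t)A,\qquad A\in\mathcal{S}_2.
\]
First I would check this is well defined: since $\mathcal{S}_2$ is a two-sided ideal in $\mathcal{L}(H)$ with $\|TA\|_2\leq\|T\|_{\mathcal{L}(H)}\|A\|_2$, each $\hat{K}(s,t)$ indeed lies in $\mathcal{L}(\mathcal{S}_2)$. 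The asserted inclusion is then to be read via this embedding $K\mapsto\hat{K}$. (This is also the embedding that makes the inverse problem work, since $\langle\sqrt{\rho},\hat{K}(s,t)\sqrt{\rho}\rangle_2=tr(\rho K(s,t))$.)

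The key step is to verify that $\hat{K}$ is positive definite as an $\mathcal{L}(\mathcal{S}_2)$-valued kernel, namely that $\sum_{i,j}\langle A_i,\hat{K}(s_i,s_j)A_j\rangle_2\geq0$ for every finite family $(A_i)\subset\mathcal{S}_2$ and $(s_i)\subset S$. Expanding the Hilbert-Schmidt inner product as a trace over a fixed ONB $(e_k)$ of $H$,
\[
\langle A_i,\hat{K}(s_i,s_j)A_j\rangle_2=tr(A_i^*K(s_i,s_j)A_j)=\sum_k\langle A_ie_k,K(s_i,s_j)A_je_k\rangle_H,
\]
turns the double sum into $\sum_k\sum_{i,j}\langle a_i^{(k)},K(s_i,s_j)a_j^{(k)}\rangle_H$, where $a_i^{(k)}=A_ie_k\in H$. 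For each fixed $k$ the inner sum is nonnegative by the $\mathcal{L}(H)$-valued positive definiteness of $K$ (see \eqref{eq:A5}), and summing nonnegative terms over $k$ yields the claim.

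What needs care, rather than being a deep obstacle, is the bookkeeping that lets the trace (the $k$-sum) be interchanged with the finite $(i,j)$-sum; this is harmless because the $(i,j)$-sum is finite and each trace converges absolutely for Hilbert-Schmidt arguments. Conceptually, the cleanest packaging of the argument is the observation that $T\mapsto L_T$ (left multiplication) is a unital $*$-homomorphism from $\mathcal{L}(H)$ into $\mathcal{L}(\mathcal{S}_2)$: one checks $(L_T)^*=L_{T^*}$ directly from $\langle L_TA,B\rangle_2=tr((TA)^*B)=\langle A,L_{T^*}B\rangle_2$. Since $*$-homomorphisms are completely positive, they carry the positive block matrix $[K(s_i,s_j)]\in M_n(\mathcal{L}(H))$ to a positive block matrix in $M_n(\mathcal{L}(\mathcal{S}_2))$, which is exactly the positive definiteness of $\hat{K}$. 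I would present the explicit trace computation as the main proof and record the $*$-homomorphism viewpoint as the structural reason behind it.
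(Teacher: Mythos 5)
Your proof is correct and takes essentially the same route as the paper: both verify the $\mathcal{L}\left(\mathcal{S}_{2}\right)$-valued positive definiteness by writing $\left\langle A_{i},K\left(s_{i},s_{j}\right)A_{j}\right\rangle _{2}=tr\left(A_{i}^{*}K\left(s_{i},s_{j}\right)A_{j}\right)$ and reducing to the p.d. condition \eqref{eq:A5} applied to the vectors $A_{i}\varphi$ (the paper tests the operator $\sum_{i,j}A_{i}^{*}K\left(s_{i},s_{j}\right)A_{j}$ against an arbitrary $\varphi\in H$, while you sum over an ONB $\left(e_{k}\right)$ --- the same computation). Your added observations, that the inclusion is implemented by left multiplication $T\mapsto L_{T}$ and that this is a unital $*$-homomorphism hence completely positive, are sound and make explicit what the paper leaves implicit, but the core argument coincides.
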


\begin{proof}
Let $K\in\mathscr{K}\left(S,\mathcal{L}\left(H\right)\right)$. Note
that $K\in\mathscr{K}\left(S,\mathcal{L}\left(\mathcal{S}_{2}\right)\right)$
if and only if 
\[
\sum_{i,j}\left\langle A_{i},K\left(s_{i},s_{j}\right)A_{j}\right\rangle _{2}=\sum_{i,j}tr\left(A_{i}^{*}K\left(s_{i},s_{j}\right)A_{j}\right)\geq0,\quad\forall\left(A_{i}\right)_{i=1}^{N}\subset\mathcal{S}_{2}.
\]
It suffices to verify that 
\[
\sum_{i,j}\left\langle \varphi,A_{i}^{*}K\left(s_{i},s_{j}\right)A_{j}\varphi\right\rangle _{H}\geq0,\quad\forall\varphi\in H.
\]
This follows from the p.d. assumption of $K$, since 
\begin{align*}
\sum_{i,j}\left\langle \varphi,A_{i}^{*}K\left(s_{i},s_{j}\right)A_{j}\varphi\right\rangle _{H} & =\sum_{i,j}\left\langle A_{i}\varphi,K\left(s_{i},s_{j}\right)A_{j}\varphi\right\rangle _{H}\\
 & =\sum_{i,j}\left\langle h_{i},K\left(s_{i},s_{j}\right)h_{j}\right\rangle _{H}\geq0,\quad h_{i}:=A_{i}\varphi.
\end{align*}
\end{proof}
\begin{rem*}
Thus, there is an associated scalar-valued p.d. kernel $\tilde{K}:\left(S\times\mathcal{S}_{2}\right)\times\left(S\times\mathcal{S}_{2}\right)\rightarrow\mathbb{C}$,
with 
\[
\tilde{K}\left(\left(s,A\right),\left(t,B\right)\right)=\left\langle A,K\left(s,t\right)B\right\rangle _{2}.
\]
And, an $\mathcal{S}_{2}$-valued Gaussian process $W$ satisfying
that 
\[
\mathbb{E}\left[\left\langle A,W_{s}\right\rangle _{2}\left\langle W_{t},B\right\rangle _{2}\right]=\left\langle A,K\left(s,t\right)B\right\rangle _{2}
\]
for all $s,t\in S$, and $A,B\in\mathcal{S}_{2}$. 
\end{rem*}
Below, we analyze a simple case when $\dim H<\infty$, which reduces
\eqref{eq:e1} to a constraint kernel approximation (least-squares)
problem. 

Assume $H$ is finite-dimensional. Let 
\begin{equation}
K\left(s,t\right)=Q\left(s\cap t\right),\quad s,t\in\mathscr{B}_{\Omega},\label{eq:E2}
\end{equation}
where $Q:\left(\Omega,\mathscr{B}_{\Omega}\right)\rightarrow\mathcal{L}\left(H\right)$
is a POVM over a measurable space $\left(\Omega,\mathscr{B}_{\Omega}\right)$.
Recall that $Q\left(\Omega\right)=I_{H}$. Introduce $\tilde{K}$
and its RKHS $H_{\tilde{K}}$ as follows: 
\begin{align*}
\tilde{K}\left(\left(s,A\right),\left(t,B\right)\right) & =\left\langle \tilde{K}_{\left(s,A\right)},\tilde{K}_{\left(t,B\right)}\right\rangle _{H_{\tilde{K}}}\\
 & =\left\langle A,Q\left(s\cap t\right)B\right\rangle _{2}=tr\left(A^{*}Q\left(s\cap t\right)B\right),
\end{align*}
for all Borel sets $s,t\in\mathscr{B}_{\Omega}$, and all $A,B\in\mathcal{S}_{2}\left(H\right)$
(i.e., all $A,B\in\mathcal{L}\left(H\right)$, since $\dim H<\infty$). 

Set $V:\mathcal{S}_{2}\left(H\right)\rightarrow H_{\tilde{K}}$ by
\[
VA=\tilde{K}\left(\cdot,\left(\Omega,A\right)\right).
\]
This is an isometric embedding of $\mathcal{S}_{2}\left(H\right)$
into $H_{\tilde{K}}$, since 
\[
\left\Vert VA\right\Vert _{H_{\tilde{K}}}^{2}=\left\langle \tilde{K}\left(\cdot,\left(\Omega,A\right)\right),\tilde{K}\left(\cdot,\left(\Omega,A\right)\right)\right\rangle _{H_{\tilde{K}}}=\left\langle A,Q\left(\Omega\right)A\right\rangle _{2}=\left\Vert A\right\Vert _{2}^{2}.
\]

\begin{lem}
Given $\left(c_{i}\right)_{i=1}^{N}$ in $\mathbb{R}_{+}$, and $\left(s_{i}\right)_{i=1}^{N}$
in $\mathscr{B}_{\Omega}$, we have 
\[
\sum_{i}\left|tr\left(AQ\left(s_{i}\right)\right)-c_{i}\right|^{2}=\sum_{i}\left|\left\langle \tilde{K}_{\left(s_{i},I_{H}\right)},\tilde{K}_{\left(\Omega,A\right)}\right\rangle _{H_{\tilde{K}}}-c_{i}\right|^{2}.
\]
\end{lem}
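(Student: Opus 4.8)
The plan is to reduce the claimed equality of sums to a term-by-term identity, since the only quantity that could differ between the two sides is the scalar appearing inside each modulus. Concretely, I would show that for every fixed index $i$,
\[
tr\left(AQ\left(s_{i}\right)\right)=\left\langle \tilde{K}_{\left(s_{i},I_{H}\right)},\tilde{K}_{\left(\Omega,A\right)}\right\rangle _{H_{\tilde{K}}},
\]
after which subtracting $c_{i}$, taking $\left|\,\cdot\,\right|^{2}$, and summing over $i$ yields the lemma immediately.

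To establish this identity I would unwind the reproducing-kernel inner product using the defining relation recorded in the setup of this subsection, namely $\left\langle \tilde{K}_{\left(s,A'\right)},\tilde{K}_{\left(t,B'\right)}\right\rangle _{H_{\tilde{K}}}=\tilde{K}\left(\left(s,A'\right),\left(t,B'\right)\right)=\left\langle A',Q\left(s\cap t\right)B'\right\rangle _{2}$. Specializing $\left(s,A'\right)=\left(s_{i},I_{H}\right)$ and $\left(t,B'\right)=\left(\Omega,A\right)$, the inner product becomes $\left\langle I_{H},Q\left(s_{i}\cap\Omega\right)A\right\rangle _{2}=tr\left(I_{H}^{*}Q\left(s_{i}\cap\Omega\right)A\right)$. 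Three elementary simplifications finish the computation: $s_{i}\cap\Omega=s_{i}$ since $s_{i}\in\mathscr{B}_{\Omega}$; $I_{H}^{*}=I_{H}$; and cyclicity of the trace, $tr\left(Q\left(s_{i}\right)A\right)=tr\left(AQ\left(s_{i}\right)\right)$. This produces exactly the left-hand side of the term-by-term identity.

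I do not expect a genuine obstacle here; the content is a direct application of the reproducing property. The only points requiring care are bookkeeping ones. First, one must respect the physics convention that $\left\langle A,B\right\rangle _{2}=tr\left(A^{*}B\right)$ is conjugate-linear in the first slot, so that placing $I_{H}$ in that slot is what forces the factor $I_{H}^{*}$. Second, I would flag that the normalization $Q\left(\Omega\right)=I_{H}$, which earlier made $V$ isometric, is \emph{not} what drives this particular identity; what is used is only the set identity $s_{i}\cap\Omega=s_{i}$. Finally, I would note explicitly that no self-adjointness of $A$ is assumed, so $tr\left(AQ\left(s_{i}\right)\right)$ may be complex; this is harmless, since both expressions sit inside $\left|\,\cdot-c_{i}\,\right|^{2}$ and the argument needs only the equality of the two complex scalars.
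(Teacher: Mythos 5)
Your proposal is correct and is essentially the paper's own proof: the paper likewise verifies the term-by-term identity $tr\left(AQ\left(s\right)\right)=tr\left(I_{H}^{*}Q\left(s\right)A\right)=\left\langle I_{H},Q\left(s\right)A\right\rangle _{2}=\left\langle \tilde{K}_{\left(s,I_{H}\right)},\tilde{K}_{\left(\Omega,A\right)}\right\rangle _{H_{\tilde{K}}}$, using exactly the defining relation for $\tilde{K}$, the identity $s\cap\Omega=s$, and cyclicity of the trace. Your additional bookkeeping remarks (conjugate-linearity in the first slot, and the fact that $Q\left(\Omega\right)=I_{H}$ is not needed here) are accurate but do not change the argument.
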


\begin{proof}
Note that 
\begin{align*}
tr\left(AQ\left(s\right)\right) & =tr\left(I_{H}^{*}Q\left(s\right)A\right)\\
 & =\left\langle I_{H},Q\left(s\right)A\right\rangle _{2}=\left\langle \tilde{K}_{\left(s,I_{H}\right)},\tilde{K}_{\left(\Omega,A\right)}\right\rangle _{H_{\tilde{K}}}.
\end{align*}
\end{proof}
\begin{cor}
In the current setting, \eqref{eq:e1} is converted to the following
constraint kernel approximation problem: 
\begin{gather}
\min\left\{ \sum_{i=1}^{N}\left|\left\langle \tilde{K}_{\left(s_{i},I_{H}\right)},\tilde{K}_{\left(\Omega,A\right)}\right\rangle _{H_{\tilde{K}}}-c_{i}\right|^{2}:A\in\mathcal{S}_{1}^{1}\left(H\right)\right\} \nonumber \\
\Updownarrow\nonumber \\
\min\left\{ \sum_{i=1}^{N}\left|F\left(s_{i},I_{H}\right)-c_{i}\right|^{2}:F\in V\left(\mathcal{S}_{1}^{1}\left(H\right)\right)\subsetneq H_{\tilde{K}}\right\} .\label{eq:5-3}
\end{gather}
\end{cor}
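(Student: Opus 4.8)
The plan is to read the two lines of \eqref{eq:5-3} as two successive reformulations of the objective in \eqref{eq:e1}, specialized to the POVM outcome data $c_{i}=tr\left(\rho Q\left(s_{i}\right)\right)$, and to justify each reformulation by invoking a single already-established fact. First I would record the harmless bookkeeping that, since $\dim H<\infty$, every quantum state $\rho\in\mathcal{S}_{1}^{1}\left(H\right)$ is automatically Hilbert--Schmidt, so setting $A=\rho$ places the state in the domain $\mathcal{S}_{2}\left(H\right)$ of both the kernel $\tilde{K}$ and the isometry $V$, while the constraint $\rho\in\mathcal{S}_{1}^{1}\left(H\right)$ is simply carried along unchanged as $A\in\mathcal{S}_{1}^{1}\left(H\right)$.

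The first (top) reformulation is then immediate from the preceding Lemma: applied with $A=\rho$, it gives, term by term,
\[
\left|tr\left(\rho Q\left(s_{i}\right)\right)-c_{i}\right|^{2}=\left|\left\langle \tilde{K}_{\left(s_{i},I_{H}\right)},\tilde{K}_{\left(\Omega,A\right)}\right\rangle _{H_{\tilde{K}}}-c_{i}\right|^{2},
\]
so the two sums agree for every admissible $A$, whence the two minima coincide.

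For the equivalence marked $\Updownarrow$, I would invoke the reproducing property \eqref{eq:A3} of $H_{\tilde{K}}$. Setting $F:=VA=\tilde{K}\left(\cdot,\left(\Omega,A\right)\right)=\tilde{K}_{\left(\Omega,A\right)}$ and evaluating at the point $\left(s_{i},I_{H}\right)\in S\times\mathcal{S}_{2}\left(H\right)$ yields
\[
F\left(s_{i},I_{H}\right)=\left\langle \tilde{K}_{\left(s_{i},I_{H}\right)},F\right\rangle _{H_{\tilde{K}}}=\left\langle \tilde{K}_{\left(s_{i},I_{H}\right)},\tilde{K}_{\left(\Omega,A\right)}\right\rangle _{H_{\tilde{K}}},
\]
so each summand of the top line equals $\left|F\left(s_{i},I_{H}\right)-c_{i}\right|^{2}$. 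Because $V$ is an isometric embedding, hence injective, the assignment $A\mapsto F=VA$ is a bijection of the constraint set $\mathcal{S}_{1}^{1}\left(H\right)$ onto its image $V\left(\mathcal{S}_{1}^{1}\left(H\right)\right)$; under this bijection the objective is preserved, so minimizing over $A\in\mathcal{S}_{1}^{1}\left(H\right)$ is the same as minimizing over $F\in V\left(\mathcal{S}_{1}^{1}\left(H\right)\right)$, which is the bottom line of \eqref{eq:5-3}.

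The only substantive points are bookkeeping: keeping the inner-product ordering and conjugations straight when applying the reproducing property (recall $F$ is conjugate-linear in its $\mathcal{S}_{2}$-slot, as noted in the first corollary following the Universal Factorization theorem), and confirming that $V\left(\mathcal{S}_{1}^{1}\left(H\right)\right)$ is a proper subset of $H_{\tilde{K}}$, so the constraint genuinely restricts the search. I do not anticipate a real obstacle here; the entire content is the transport of the two identities---the Lemma and the reproducing property---through the isometry $V$.
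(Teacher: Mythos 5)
Your proof is correct and takes essentially the same route as the paper's: the preceding Lemma converts the trace data $tr\left(\rho Q\left(s_{i}\right)\right)$ into the inner products $\left\langle \tilde{K}_{\left(s_{i},I_{H}\right)},\tilde{K}_{\left(\Omega,A\right)}\right\rangle _{H_{\tilde{K}}}$, and the reproducing property together with the identification $F=VA$, $A\in\mathcal{S}_{1}^{1}\left(H\right)$, gives the equivalence of the two minimizations. Your extra observations---that $\dim H<\infty$ places states in $\mathcal{S}_{2}\left(H\right)$ and that the isometry $V$ is injective, so $A\mapsto VA$ bijects the constraint set onto its image---only make explicit what the paper leaves implicit.
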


\begin{proof}
Recall that $VA\left(\cdot\right)=\tilde{K}\left(\cdot,\left(\Omega,A\right)\right)$,
so that 
\[
\left\langle \tilde{K}_{\left(s_{i},I_{H}\right)},\tilde{K}_{\left(\Omega,A\right)}\right\rangle _{H_{\tilde{K}}}=\left(VA\right)\left(s_{i},I_{H}\right)
\]
by the reproducing property of $H_{\tilde{K}}$. Then $F$ in \eqref{eq:5-3}
is of the form $F=VA$, $A\in\mathcal{S}_{1}^{1}\left(H\right)$. 
\end{proof}
\begin{rem}
Kernel approximation methods are widely used in statistical learning
for tasks such as regression, classification, and density estimation.
They are particularly useful in settings where the underlying functional
relationships are complex and non-linear, and where a balance between
fitting the training data and maintaining generalizability to new
data is crucial. The standard formulation is as follows: 
\begin{equation}
\min\left\{ \sum_{i=1}^{N}\left|f\left(x_{i}\right)-y_{i}\right|^{2}+\lambda\left\Vert f\right\Vert _{H_{K}}^{2}:f\in H_{K}\right\} \label{eq:e3}
\end{equation}
where $H_{K}$ is the RKHS of a scalar-valued kernel $K$ on some
set $S\times S$, $\left(y_{i}\right)\subset\mathbb{R}$, and $\lambda$
is a regularization parameter. 

A key difference between \eqref{eq:5-3} and \eqref{eq:e3} is that,
in \eqref{eq:5-3}, $F$ is restricted to a convex subset, rather
than being defined over the entire space $H_{\tilde{K}}$. A standard
method to solve \eqref{eq:5-3} is to use projected gradient descent.
\end{rem}

\begin{rem}
For a given POVM $Q$ as in \eqref{eq:E2}, the construction of $H_{\tilde{K}}$
and $V$ in the above example can be used to show that $Q$ has a
PVM dilation in $H_{\tilde{K}}$. The verification is similar to that
of the Stinespring dilation theorem in \prettyref{sec:3} (also see
\cite[Section 2]{MR4787903}).
\end{rem}

\begin{acknowledgement*}
We thank the anonymous referee for pointing out that the construction
of $H_{\tilde{K}}$ underlying our approach was originally developed
in \cite{MR2938971} and for bringing Pedrick\textquoteright s \textquotedblleft Tilde
Correspondence\textquotedblright{} to our attention \cite{MR4250453}.
\end{acknowledgement*}
\bibliographystyle{amsalpha}
\nocite{*}
\bibliography{ref}

\end{document}